\documentclass[11pt,a4paper]{article}
\pdfoutput=1
\usepackage[left=2.3cm,right=2.3cm,top=2.3cm,bottom=2.3cm]{geometry}
\usepackage[ascii]{inputenc}
\usepackage{mathtools}
\usepackage{times}
\usepackage[dvipsnames,svgnames]{xcolor}
\usepackage[bookmarksnumbered,linktocpage,hypertexnames=false,colorlinks=true,linkcolor=NavyBlue,urlcolor=NavyBlue,citecolor=ForestGreen,anchorcolor=green,breaklinks=true,pagebackref=true,pdfusetitle]{hyperref}
\usepackage[normalem]{ulem}
\usepackage{bbm,mathrsfs,amsmath,amssymb,amsthm,amsfonts,latexsym,graphicx,multirow,booktabs,bm,xspace,float,mathdots,caption,subcaption,ellipsis,mleftright,setspace,dsfont,faktor,kantlipsum,algorithm,tocloft,quantikz,tocloft}
\usepackage[shortlabels]{enumitem}
\usepackage[T1]{fontenc}
\usepackage{textcomp}
\usepackage[sort&compress,numbers]{natbib}
\usepackage[english]{babel}
\usepackage[capitalize,nameinlink]{cleveref}
\newcommand{\crefpart}[2]{\hyperref[#2]{\namecref{#1}~\labelcref*{#1}~\ref*{#2}}}
\newtheorem{theorem}{Theorem}[section]
\newtheorem{corollary}[theorem]{Corollary}\AddToHook{env/corollary/begin}{\crefalias{theorem}{corollary}}
\AddToHook{env/definition/begin}{\crefalias{theorem}{definition}}
\newtheorem{lemma}[theorem]{Lemma}\AddToHook{env/lemma/begin}{\crefalias{theorem}{lemma}}
\newtheorem{proposition}[theorem]{Proposition}\AddToHook{env/proposition/begin}{\crefalias{theorem}{proposition}}\crefname{proposition}{Proposition}{Propositions}
\AddToHook{env/conjecture/begin}{\crefalias{theorem}{conjecture}}
\AddToHook{env/remark/begin}{\crefalias{theorem}{remark}}
\newtheorem*{theorem*}{Theorem}
\AddToHook{cmd/appendix/after}{\crefalias{section}{appendix}}
\numberwithin{equation}{section}
\allowdisplaybreaks[4]
\newcommand{\ot}{\otimes}
\newcommand{\op}{\oplus}
\newcommand{\eps}{\varepsilon}
\newcommand{\dd}{\textrm{d}}
\newcommand{\bigO}{\mathcal O}
\renewcommand{\Re}{\operatorname{Re}}

\newcommand{\RR}{\mathbbm{R}}
\newcommand{\CC}{\mathbbm{C}}
\newcommand{\NN}{\mathbbm{N}}
\newcommand{\HH}{\mathcal{H}}

\newcommand{\id}{I}
\newcommand{\GG}{\mathcal G}

\newcommand{\1}{\mathbf{1}}
\let\oldS\S
\renewcommand{\S}{\operatorname{S}}
\renewcommand{\P}{\operatorname{P}}
\DeclareMathOperator{\grad}{grad}

\DeclareMathOperator{\tr}{tr}
\DeclareMathOperator{\Sp}{Sp}
\DeclareMathOperator{\SO}{SO}
\DeclareMathOperator{\U}{U}
\DeclareMathOperator{\SU}{SU}
\DeclareMathOperator{\Spin}{Spin}

\DeclareMathOperator{\Sym}{Sym}

\DeclareMathOperator{\Span}{span}
\DeclareMathOperator{\spec}{spec}

\DeclarePairedDelimiter\abs{\lvert}{\rvert}
\DeclarePairedDelimiter\norm{\lVert}{\rVert}

\DeclarePairedDelimiter\ceil{\lceil}{\rceil}
\DeclarePairedDelimiter\parens{\lparen}{\rparen}
\DeclarePairedDelimiter\braces{\lbrace}{\rbrace}

\DeclarePairedDelimiter\ip{\langle}{\rangle}

\begin{document}

\title{Testing bosonic and fermionic Gaussian states with~a~constant~number~of~copies}
\author{Jop Bri\"{e}t\thanks{CWI \& QuSoft, Amsterdam, University of Leiden \href{mailto:j.briet@cwi.nl}{j.briet@cwi.nl}}
\and
Davi Castro-Silva\thanks{Department of Computer Science, University of Warwick, \href{mailto:davi.de-castro-silva@warwick.ac.uk}{davi.de-castro-silva@warwick.ac.uk}}
\and
Michael Walter\thanks{LMU Munich \& MCQST and University of Amsterdam \& QuSoft, \href{mailto:michael.walter@lmu.de}{michael.walter@lmu.de}}
\and
Freek Witteveen\thanks{CWI \& QuSoft, Amsterdam, \href{mailto:f.witteveen@cwi.nl}{f.witteveen@cwi.nl}}}
\date{}
\maketitle
\begin{abstract}
Bosonic and fermionic Gaussian quantum states are families of quantum states that play a central role in quantum physics and quantum information processing.
We study the problem of testing whether a given state is either Gaussian, or has bounded overlap with any Gaussian state.
Previous analyses of this testing problem led to tests that require a number of copies of the state that scales with the number of modes (the system size).
In this work we show that a \emph{constant} number of copies of the state suffices. We derive these results based on the symmetries of identical copies of Gaussian states, and by bounding the distance to the set of Gaussian states through a gradient flow and a spectral bound studied previously for the Kac random walk on a sphere.
\end{abstract}

\section{Introduction}
Gaussian quantum states are an important set of parametrized quantum states. They are relevant as ground states of quadratic Hamiltonians in quantum optics and material science, as a tractable set of quantum states for approximation or simulation by classical computation, and are often used in quantum communication protocols. As the other side of the coin, \emph{non-Gaussian} states also play a central role: they appear as ground states of more complicated Hamiltonians, and as a resource for quantum computation beyond classical simulability. This motivates the following basic question: given copies of a state $\ket{\psi}$, can one do a measurement that determines whether the state is Gaussian, or has bounded maximal overlap with all Gaussian states?
This question has been studied in previous works \cite{lyu2024fermionic,bu2025efficient,tarabunga2026fermionic,girardi2025gaussian}. These works led to protocols that resolve this question, provided access to a number of copies of the state that scales with the number of bosonic or fermionic modes.

The present work shows that a \emph{constant} number of copies suffices.

\begin{theorem}[Testing Gaussianity with constant sample complexity]\label{thm:main-intro}
    There are tests that use $O(\frac{1}{\eps}\log\frac{1}{\delta})$ copies of a state $\ket{\psi}$, and determine, with probability of failure at most $\delta$, whether $\ket{\psi}$ is a fermionic or bosonic Gaussian state, or $\abs{\braket{\phi}{\psi}}^2 \leq 1 - \eps$ for all Gaussian states~$\ket{\phi}$.
\end{theorem}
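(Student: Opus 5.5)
The plan is to build a single test on a constant number $t$ of copies (say $t=4$ for fermions, and a small constant for bosons). We accept perfectly on Gaussian states, and we show that any state with $\max_\phi \abs{\braket{\phi}{\psi}}^2 \le 1-\eps$ is rejected with probability $\Omega(\eps)$. Repeating this $O(\frac1\eps\log\frac1\delta)$ times and rejecting if any run rejects gives the stated bound. The test exploits the symmetry of identical copies. The Gaussian unitaries act diagonally on $\HH^{\ot t}$, through $\SO(2n)$ (fermions) or $\Sp(2n)$ (bosons). On the $t$ copies there is a commuting "mixing" symmetry: a group $\SO(t)$ acts on the $t$ copies of the $2n$ Majorana or quadrature operators, by Howe duality. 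For any Gaussian $\ket\phi$, the state $\ket\phi^{\ot t}$ is invariant under this dual group. For example, a beam splitter applied to two copies of a Gaussian state returns two copies of that state, up to a fixed phase or parity correction. The test is therefore the projective measurement $\{P,I-P\}$, where $P$ projects onto the invariant subspace of the dual symmetry, or equivalently onto the span of $\{\ket\phi^{\ot t}\}$. Completeness is then immediate.

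For soundness we must show $\bra{\psi}^{\ot t}P\ket{\psi}^{\ot t}\ge 1-\eps'$ implies $\max_\phi\abs{\braket\phi\psi}^2\ge 1-O(\eps')$, with constants independent of $n$. I would first express the acceptance probability as an average. By twirling, $P=\int U_g\, dg$ over the compact dual group (for bosons, restrict to the compact orthogonal subgroup acting on copies). This gives $\Pr[\text{accept}]=\int \bra{\psi}^{\ot t}U_g\ket{\psi}^{\ot t}dg$. The point is that this is the stationary value of a random walk that mixes pairs of copies by random rotations, which resembles the Kac walk. Write $f(\psi)=1-\Pr[\text{accept}]$. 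The key inequality to establish is a Łojasiewicz-type gradient bound $\norm{\grad f(\psi)}^2\ge c\, f(\psi)$ on the unit sphere of $\HH$, where the gradient is taken along the directions generated by Gaussian unitaries and a normal component. I would derive it from a spectral gap of the Kac-type operator $K=\mathbb E_{\text{pair},\theta}R_\theta$ restricted to the relevant representation. Known gap bounds for the Kac walk on the sphere are dimension-free up to normalization (Carlen–Carvalho–Loss), and this is where $n$-independence enters.

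Given the gradient inequality, I would run the gradient flow $\dot\psi=-\grad f(\psi)$ starting at $\psi$. The standard Łojasiewicz argument shows that $f$ decays exponentially and that the path length is bounded by $O(\sqrt{f(\psi)})$. The flow therefore converges to a zero of $f$, which must be a Gaussian state, since $f=0$ forces invariance under the dual symmetry, and a converse Howe-duality statement gives Gaussianity. This Gaussian state lies at distance $O(\sqrt{f(\psi)})$ from $\psi$, which gives $1-\abs{\braket\phi\psi}^2\le O(f(\psi))$. So $f(\psi)\ge\Omega(\eps)$, as needed.

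The main obstacle is the gradient/spectral inequality with an $n$-independent constant. First, one must identify the acceptance probability with a quadratic form of a Kac-like averaging operator. Second, one must show that its gap on the complement of the Gaussian locus controls $f$ linearly. I would try to first reduce to a fixed irreducible component using $\SO(2n)$ or $\Sp(2n)$ covariance. I would then import the Kac walk gap, which is dimension-free, for the pair-mixing generator. For bosons there is an additional technicality: the infinite-dimensional Hilbert space and the noncompact $\Sp(2n)$. To handle it, I would work in the Fock basis with finite-energy vectors and check that the flow stays well defined.
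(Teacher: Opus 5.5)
Your repetition step and overall architecture (a perfectly complete copy-rotation test, a \L{}ojasiewicz-type gradient bound, a gradient flow of controlled length, and the Kac gap) match the paper. However, the step you flag as the main obstacle is the actual content of the proof, and the route you sketch for it cannot work. You propose using a gap $\gamma$ of your pair-averaged Kac operator $K$ on the \emph{same} $t$ copies as the test. Evaluated on $\psi^{\ot t}$, such a gap only relates $t$-copy quantities: from $K\le P+(1-\gamma)(I-P)$ and $\ip{K}_{\psi^{\ot t}}=p_2(\psi)$ you get $1-p_2\ge\gamma\,(1-p_t)$, where $p_2,p_t$ are the two-copy and $t$-copy acceptance probabilities. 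That compares two tests. It says nothing about $\norm{\grad f}$, which is a $(2t-1)$-copy quantity, or about the distance to $\GG$.

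The missing idea is to use a \emph{two}-copy test $P$ and pass to \emph{three} copies. With $K_\psi=(I\ot\bra\psi)P(I\ot\ket\psi)$ one has $\grad p(\psi)=4(K_\psi-p)\ket\psi$, hence $\norm{\grad p}^2=16(q-p^2)$ with $q=\norm{K_\psi\ket\psi}^2$. These are exactly the moments $\ip{S}_{\psi^{\ot 3}}=p$ and $\ip{S^2}_{\psi^{\ot 3}}=\frac13(p+2q)$ of $S=\frac13(P_{12}+P_{13}+P_{23})$. Evaluating $(S-I)(S-\alpha)\ge0$ on $\psi^{\ot3}$ gives $\norm{\grad p}^2\ge16(p-\frac32\alpha)(1-p)$. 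This holds only for $p>\frac32\alpha$, not globally as you state, but that suffices because $p$ increases along the flow.

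Nor does independence of the number of modes come from dimension-freeness of Kac gaps. Here $S=T_\pi$ for the representation of $\SO(3)$ rotating the three copies. The bound $\spec(T_\pi)\subseteq[0,\frac7{12}]\cup\{1\}$ holds for \emph{every} representation $\pi$, since every irreducible representation of $\SO(3)$ occurs in $\mathrm{L}^2(\S^2)$ (or directly from the $3\times3$ Gram matrix of the spin-$j$ zero-weight vectors). This is special to three copies: for your $t=4$, $\SO(4)$ has irreducible representations not occurring in $\mathrm{L}^2(\S^3)$, so the Kac literature would not give the gap you need.

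Second, your test is not complete for bosonic Gaussian states with nonzero mean, which the theorem covers. A copy rotation $O\in\SO(t)$ maps the mean $r\ot w$ of $\phi^{\ot t}$, with $w=(1,\dots,1)$, to $r\ot Ow$. So $\ket\phi^{\ot t}$ is invariant only if $\phi$ has zero mean; for instance, $U_{\pi/4}^\dagger$ maps $(D_r\ket0)^{\ot2}$ to $D_{\sqrt2r}\ket0\ot\ket0$. Hence the rotation-invariant subspace is not the span of $\{\ket\phi^{\ot t}:\phi\in\GG_m\}$, and your test rejects every displaced Gaussian state with positive probability, tending to one for large displacements.

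The paper needs a separate argument here. It applies $U_{\pi/4}^\dagger$ to two pairs of copies, so that the second outputs carry a mean-zero but generally mixed state $\rho$, and runs the mean-zero test on $\rho^{\ot2}$. Soundness then uses three further ingredients:
\begin{enumerate}
\item a purity lemma, $\tr[\rho^{\ot2}P_0]\ge1-\eps\Rightarrow\norm{\rho}_\infty\ge1-2\eps$;
\item the mean-zero bound applied to the top eigenvector of $\rho$, to find a symplectic $U_S$ (with $U_S^{\ot2}$ commuting with $U_{\pi/4}$) that rotates it close to the vacuum;
\item a separate coherent-state test $Q=U_{\pi/4}(I\ot\proj{0})U_{\pi/4}^\dagger$, whose three-copy operator has spectrum in $[0,\frac12]\cup\{1\}$, shown by splitting $\CC^3=V\oplus\CC w$ inside $\mathcal F_\vee(\CC^m\ot\CC^3)$.
\end{enumerate}
Together these give $F^2_{\GG_m}\ge1-48\eps$. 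None of these steps appears in your plan.
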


We also give tests for mean-zero bosonic states, and for coherent states. The tests act on two copies at a time when testing fermionic, mean-zero bosonic Gaussian states, or coherent states. The test for bosonic Gaussian states uses four copies at a time.

The test is based on a well-known characterization of Gaussian states, based on their invariance properties under phase space rotations. For fermionic Gaussian states and bosonic mean-zero Gaussian states, we consider a two-copy test with accepting operator $P$, which projects onto states which are invariant under rotations (beam-splitters of arbitrary angles) between the two copies.
This measurement can be efficiently implemented with a quantum circuit: \cite{haug2026practical,tarabunga2026fermionic} gives a measurement scheme in terms of Bell sampling, and \cite{hertz2019multicopy} gives an implementation in the bosonic case.
For the case of non-zero mean, there is an easy trick to get a new state from two copies that has zero mean, by applying a beam-splitter and discarding one copy. This can be used for testing coherent states, and for reducing testing general bosonic states to the mean-zero test. See \cref{fig:tests} for an overview.
These tests all have perfect completeness:
they accept a state with probability one if and only if they are in the target class.

The challenging part in bounding the sample complexity consists in analyzing the two-copy tests.
Here we use a tool that may be of wider interest, which we briefly explain in a more general context.
Assume that we have a two-copy test with accepting projection operator $P$, and write $p(\psi) = \bra{\psi}^{\ot 2} P \ket{\psi}^{\ot 2}$ for the probability of accepting a state. Assume that the set of states $\GG$ we are testing are exactly those with $p(\psi) = 1$.
What we need to show is that, if $p(\psi)$ is high, then $\ket{\psi}$ is close to $\GG$.
(We measure closeness to the set~$\mathcal G$ using the standard notion of fidelity $F_\mathcal G(\psi) = \sup_{\phi\in \mathcal G}|\braket{\psi}{\phi}|$.) 
This means that if $P$ rejects a state, then we can bound it away from the set $\GG$. 
We show that this can be related to the spectrum of a 3-copy operator.
Below, we write $P_{ab}$ for $P$ acting on copies $a$ and $b$.

\begin{theorem}[Robust property testing from spectral gap, informal]\label{thm:gradient-flow-intro}
Let $S = \frac13\left(P_{12} + P_{13} + P_{23}\right)$.
Suppose that $\spec(S) \subseteq [0,\alpha] \cup \{1\}$ for some $\alpha\in[0,2/3)$. Then, there is a constant $C_\alpha$ such that
\begin{align*}
    p(\psi) \geq 1 - \eps \quad\Rightarrow\quad F_\GG^2(\psi) \geq 1 - C_\alpha \eps.
\end{align*}
\end{theorem}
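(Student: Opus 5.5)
The plan is to run a gradient flow of the acceptance probability $p$ on the unit sphere (more precisely, on projective space). The spectral hypothesis on $S$ should give a Łojasiewicz-type inequality $\norm{\grad p(\psi)}^2 \geq c_\alpha\,(1-p(\psi))$. The flow then reaches a state with $p=1$, i.e.\ a point of $\GG$, after travelling only a distance $O(\sqrt{\eps})$. Throughout I assume, as in the tests of interest, that $P$ is a projection commuting with the swap of the two copies, and that $\GG=\{\psi : p(\psi)=1\}$ is closed.

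\emph{Step 1 (gradient in terms of three copies).} Define the single-copy operator $A_\psi = (\bra{\psi}\ot \id)\,P\,(\ket{\psi}\ot \id)$, so that $0\le A_\psi\le \id$ and $p(\psi)=\bra{\psi}A_\psi\ket{\psi}$. By the swap symmetry of $P$, the Riemannian gradient of $p$ at $\psi$ is proportional to $(\id-\ket\psi\bra\psi)A_\psi\ket{\psi}$. Hence
\begin{align*}
\norm{\grad p(\psi)}^2 \propto \bra{\psi}A_\psi^2\ket{\psi} - p(\psi)^2 .
\end{align*}
I would then identify $\bra{\psi}A_\psi^2\ket{\psi}$ with the three-copy overlap $q(\psi)=\bra{\psi}^{\ot 3}P_{12}P_{13}\ket{\psi}^{\ot 3}$. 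By symmetry of $\ket\psi^{\ot 3}$, every product $P_{ab}P_{ac}$ with $\{a,b\}\neq\{a,c\}$ has this same expectation. This gives
\begin{align*}
\bra{\psi}^{\ot 3} S\ket{\psi}^{\ot 3} = p, \qquad \bra{\psi}^{\ot 3} S^2\ket{\psi}^{\ot 3} = \tfrac{1}{9}\bigl(3p+6q\bigr).
\end{align*}
Some care is needed to check which contraction appears (possibly $\bra{\psi}A_\psi^2\ket\psi$ differs from $q$ by terms that are themselves controlled by $1-p$). I would do this computation explicitly first.

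\emph{Step 2 (spectral gap $\Rightarrow$ Łojasiewicz inequality).} Since $\spec(S)\subseteq[0,\alpha]\cup\{1\}$, the operator $(S-\alpha)(S-\id)$ is positive semidefinite. Taking its expectation in $\ket\psi^{\ot3}$ gives
\begin{align*}
\tfrac13 p+\tfrac23 q \;\ge\; (1+\alpha)p-\alpha .
\end{align*}
Writing $p=1-\eps$, this yields $q \ge 1-\tfrac{3\alpha+2}{2}\eps$. On the other hand, $p^2 = 1-2\eps+\eps^2$. So
\begin{align*}
q-p^2 \ge \Bigl(2-\tfrac{3\alpha+2}{2}\Bigr)\eps - \eps^2 = \tfrac{2-3\alpha}{2}\,\eps-\eps^2 .
\end{align*}
This is where $\alpha<2/3$ enters. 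For $\eps\le\eps_0(\alpha)$ it gives $\norm{\grad p}^2\ge c_\alpha(1-p)$. For $\eps>\eps_0$ the statement is trivial after enlarging $C_\alpha$, since $F^2_\GG\ge 0$.

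\emph{Step 3 (flow and length bound).} Run the gradient ascent $\dot\psi_t=\grad p(\psi_t)$ from $\psi_0=\psi$. Then $p$ is nondecreasing, so the inequality of Step~2 persists along the flow. Along the flow,
\begin{align*}
\frac{d}{dt}\sqrt{1-p(\psi_t)} = -\frac{\norm{\grad p}^2}{2\sqrt{1-p}} \le -\tfrac{\sqrt{c_\alpha}}{2}\norm{\dot\psi_t}.
\end{align*}
Therefore the total path length is at most $2\sqrt{\eps/c_\alpha}$, and $1-p(\psi_t)$ decays exponentially. The curve thus has finite length and converges to some $\phi$ with $p(\phi)=1$, i.e.\ $\phi\in\GG$. (For infinite-dimensional bosonic systems, compactness must be replaced by this Cauchy/length argument, which does not need it.) Finally, $\norm{\psi-\phi}\le 2\sqrt{\eps/c_\alpha}$ gives $\abs{\braket{\phi}{\psi}}^2\ge 1-C_\alpha\eps$.

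The main obstacle is Step~1. One must correctly express the gradient norm through three-copy quantities involving only $S$ and $S^2$, including phase and normalization issues of the projective gradient and the swap symmetry of $P$. If $\bra\psi A_\psi^2\ket\psi$ does not match $q$ exactly, I would instead bound the discrepancy by $O(1-p)$ using $P\le\id$ and Cauchy--Schwarz. That would only worsen the constants, not the threshold $2/3$.
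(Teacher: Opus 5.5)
Your proposal is correct and follows the paper's proof: the same three-copy moment identities, the same positivity of $(S-I)(S-\alpha)$ evaluated on $\ket\psi^{\ot 3}$, and the same bound on the length of the gradient-flow trajectory. Your Step~1 hedge is unnecessary, because the identification is exact: $\bigl\langle P_{12}\,\psi^{\ot 3}, P_{13}\,\psi^{\ot 3}\bigr\rangle = \norm{A_\psi\ket\psi}^2 = \langle\psi|A_\psi^2|\psi\rangle$. Your proposed fallback would also not have worked as claimed, since an $O(1-p)$ discrepancy would in general shift the threshold $2/3$ and not only the constant. The only real difference is that you linearize to $\norm{\grad p}^2 \geq c_\alpha(1-p)$ for small $\eps$ and absorb large $\eps$ into the constant. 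The paper instead keeps the exact bound $\norm{\grad p}^2 \geq 16(p-\tfrac32\alpha)(1-p)$ and integrates it, which gives the explicit $C_\alpha = 1/(1-\tfrac32\alpha)$, i.e.\ the constant $8$ used in the applications.
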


It turns out that in the case of testing Gaussianity, the natural choice for $P$ leads to an operator $S$ whose spectral gap has already been studied, in the context of Kac's random walk on a sphere \cite{kac1956foundations,carlen2003determination}.

\begin{figure*}[t]
\centering

\newcommand{\testnote}[1]{%
  \par\vspace{0.4em}%
  {\scriptsize
    \begin{minipage}[t][2\baselineskip][t]{\linewidth}
      \centering #1\par
    \end{minipage}\par}%
}


\begin{subfigure}[t]{0.47\linewidth}
\vspace{0pt}
\centering

\begin{minipage}[c][2.2cm][c]{\linewidth}
\centering
\begin{quantikz}[row sep=0.35cm,column sep=0.35cm]
\lstick{$\ket{\psi}$}
  & \gate[2]{U_{\mathrm{Bell}}}
  & \meter{}
  & \setwiretype{c} \\
\lstick{$\ket{\psi}$}
  &
  & \meter{}
  & \setwiretype{c}
\end{quantikz}
\end{minipage}

\testnote{Accept if the Bell outcome is compatible with $\Lambda=0$.}

\caption{Fermionic Gaussianity test.}
\label{fig:test-fermions}
\end{subfigure}%
\hfill
\begin{subfigure}[t]{0.47\linewidth}
\vspace{0pt}
\centering

\begin{minipage}[c][2.2cm][c]{\linewidth}
\centering
\begin{quantikz}[row sep=0.5cm,column sep=0.4cm]
\lstick{$\ket{\psi}$}
  & \qw
  & \gate[2]{U_{\pi/4}}
  & \meter{N_1}
  & \setwiretype{c} \\
\lstick{$\ket{\psi}$}
  & \gate{R_{\pi/2}}
  &
  & \meter{N_2}
  & \setwiretype{c}
\end{quantikz}
\end{minipage}

\testnote{Apply a $\pi/2$ phase shift, interfere the two copies on a balanced beam splitter, and accept if equal particle number $N_1 = N_2$.}

\caption{Mean-zero bosonic Gaussianity test.}
\label{fig:test-boson-zero-mean}
\end{subfigure}

\par\vspace{1.2em}


\begin{subfigure}[t]{0.47\linewidth}
\vspace{0pt}
\centering

\begin{minipage}[c][3.6cm][c]{\linewidth}
\centering
\begin{quantikz}[row sep=0.35cm,column sep=0.35cm]
\lstick{$\ket{\psi}$}
  & \gate[2]{U_{-\pi/4}}
  & \qw
  & \setwiretype{n} \\
\lstick{$\ket{\psi}$}
  &
  & \meter{N}
  & \setwiretype{c}
\end{quantikz}
\end{minipage}

\testnote{Apply a balanced beam splitter and accept if particle number $N = 0$.}

\caption{Coherent-state test.}
\label{fig:test-coherent}
\end{subfigure}%
\hfill
\begin{subfigure}[t]{0.47\linewidth}
\vspace{0pt}
\centering

\begin{minipage}[c][3.6cm][c]{\linewidth}
\centering
\begin{quantikz}[row sep=0.24cm,column sep=0.32cm]
\lstick{$\ket{\psi}$}
  & \gate[2]{U_{-\pi/4}}
  & \qw
  & \setwiretype{n}
  & \\
\lstick{$\ket{\psi}$}
  &
  & \qw
  & \gate[2]{\text{as in (b)}}
  & \setwiretype{c} \\
\lstick{$\ket{\psi}$}
  & \gate[2]{U_{+\pi/4}}
  & \qw
  &
  & \setwiretype{c} \\
\lstick{$\ket{\psi}$}
  &
  & \qw
  & \setwiretype{n}
  &
\end{quantikz}
\end{minipage}

\testnote{Use two beam splitters to produce two mean-zero outputs, then apply the test from panel~(b).}

\caption{General bosonic Gaussianity test.}
\label{fig:test-boson-general}
\end{subfigure}

\caption{
Implementations of the four tests considered in this work. Each wire represents an $m$-mode Hilbert space.
(a) For fermions, a two-copy Bell measurement is performed and its classical outcome is post-processed, see \cite{haug2026practical,tarabunga2026fermionic} for details.
(b) For mean-zero bosons, one may realize the two-copy observable by a $\pi/2$ phase shift, a balanced beam splitter, and measuring the particle-number difference \cite{hertz2019multicopy}.
(c) For coherent states, a balanced beam splitter is applied and one output is tested to be in the vacuum state.
(d) For general bosonic Gaussian states, two balanced beam-splitters produce two mean-zero outputs, to which the test from panel~(b) is applied.
}
\label{fig:tests}
\end{figure*}

\subsection{Techniques}
We consider a system of $m$ bosonic or fermionic modes, with creation and annihilation operators $a_j^\dagger, a_j$ for $j = 1,\dots,m$. A state $\ket{\psi}$ is a pure Gaussian state if it is the nondegenerate ground state of a Hamiltonian that is quadratic in the creation and annihilation operators.
A Gaussian state is fully determined by its first and second moments in the creation and annihilation operators.

In order to test whether a state is Gaussian, we need some conditions that single out pure Gaussian states among other states. One such condition is that the symplectic eigenvalues of the covariance matrix are all equal to 1.
Thus, one natural approach to testing Gaussianity would be to estimate the covariance matrix from measurements, and use this to decide whether the state is Gaussian or not \cite{bittel2025optimal,girardi2025gaussian}.
In this work, we start from an alternative condition. For simplicity, we restrict the discussion here in the bosonic case to Gaussian states with zero mean, and refer to these simply as \emph{Gaussian states}.

\begin{description}
\item \textbf{Fermions:}
Let $\Lambda = \sum_{j=1}^{2m} \left(a_j \ot a_j^\dagger + a_j^\dagger \ot a_j\right)$.
Then $\ket{\psi}$ is Gaussian if and only if $\Lambda \ket{\psi}^{\ot 2} = 0$.
This characterization is due to~\cite{bravyi2005lagrangian,melo2013power}.

\item \textbf{Bosons:}
Let $\Lambda = i\sum_{j=1}^{2m} \left(a_j \ot a_j^\dagger - a_j^\dagger \ot a_j\right)$.
Then $\ket{\psi}$ is a (mean-zero) Gaussian state if and only if~$\Lambda \ket{\psi}^{\ot 2} = 0$. This condition is also well-known, see e.g.~\cite{girardi2025gaussian} for a recent discussion.
\end{description}

\noindent
In both cases, $\Lambda$ is a generator of rotations between the two copies. This is closely related to the symmetry properties of Gaussian states, see \cite{girardi2025gaussian,chen2026optimal}. The details of this are not necessary for this work, but we briefly summarize them for context. In Schur--Weyl duality, $U \in \U(d)$ acts on $(\CC^d)^{\ot n}$ as $U^{\ot n}$, and this action has a commutant spanned by the action of the symmetric group $S_n$ acting as permutations between the copies. If, instead, we consider the action of $U^{\ot n}$ only for (fermionic or bosonic) Gaussian unitaries, these have a larger commutant. The commutant turns out to be spanned by phase space rotations between the copies (instead of just permutations).
This is known as \emph{Howe--Kashiwara--Vergne duality}.
In this light, measuring rotation-invariance is a natural test for Gaussianity, and we prove that this test indeed has good robustness properties, independent of the number of modes.

We thus use a test where we measure whether $\ket{\psi}^{\ot 2}$ is in the kernel of $\Lambda$, the generator of rotations.
In fact, the kernel of $\Lambda$ (intersected with the symmetric subspace in the bosonic case) is precisely spanned by states $\ket{\psi}^{\ot 2}$, for (zero-mean) Gaussian states. This is analogous to the fact that the image of the projection on the symmetric subspace is spanned by states $\ket{\psi}^{\ot n}$ for arbitrary $\ket{\psi}$.
This means that we get a test with perfect completeness (it has no probability of erroneously rejecting if the state is Gaussian), and it is the optimal such test on two copies (since it measures exactly the span of states $\ket{\psi}^{\ot 2}$ for Gaussian states).
In order to prove \cref{thm:main-intro}, what we need to show is a robustness property: if a state is far from Gaussian, then it is rejected with reasonably probability. Or, in other words, if the acceptance probability $p(\psi)$ on input $\ket{\psi}$ is high, $p(\psi) \geq 1 - \eps$, then the distance to a Gaussian state must be small in a way that does not depend on the number of modes $m$.
Such a bound follows directly from \cref{thm:gradient-flow-intro}, if we bound the spectral gap of the operator $S = \frac13\left(P_{12} + P_{13} + P_{23}\right)$. In the case of Gaussian states, the operators $P_{ab}$ are related to phase space rotations in three different directions, and are related to an action of $\SO(3)$ on the three-copy space. It turns out that $S$ is closely related to a different operator, namely the generator of the \emph{Kac walk} \cite{kac1956foundations}, a random walk on an $N$-dimensional sphere, where for us $N=3$. For that operator, bounds on the spectral gap are known \cite{carlen2000many,diaconis2000bounds,maslen2003eigenvalues,carlen2003determination,caputo2008spectral} and we use these to bound the spectral gap of $S$.

We now sketch the proof of \cref{thm:gradient-flow-intro}.
Recall that we have a test with accepting projection operator $P$, write $p(\psi) = \bra{\psi}^{\ot 2} P \ket{\psi}^{\ot 2}$ for the probability of accepting a state, and we are testing states $\GG$ which are exactly those with $p(\psi) = 1$.
The idea is that it suffices to bound the gradient of the acceptance probability $p(\psi)$.
If one can show a lower bound on the gradient that is only zero for $p(\psi) = 1$, then the result follows: one can follow the gradient flow to a point where the gradient is zero, which must then be in $\GG$, and one can bound the length of the path through the lower bound on the gradient.
Such an argument, bounding distances in terms of the gradient of a function, is known  as a \L{}ojasiewicz inequality \cite{lojasiewicz1965ensembles}.
In this case, $p(\psi)$ is quartic in $\psi$, so the gradient of $p(\psi)$ is cubic in $\psi$. A simple computation shows that it can be computed in terms of the first two moments of $S$. The operator $S$ has largest eigenvalue equal to 1 (and on the symmetric subspace, this corresponds to the subspace spanned by $\ket{\psi}^{\ot 3}$ for states in $\GG$). To bound the first and second moment, and thus the gradient of $p(\psi)$, near $p(\psi) = 1$, it suffices to have a spectral gap for $S$ below the largest eigenvalue.

\subsection{Prior work}\label{sec:related-work}
Due to the central role played by Gaussian states in physics and quantum information processing, they have been extensively studied in quantum learning theory and quantum information theory. There is also the class of \emph{stabilizer states}, which occur frequently in quantum computation and quantum error correction, and which can be seen as a discrete version of bosonic Gaussian states. It turns out that all of these classes can be tested using a constant number of copies, and the underlying logic is similar between the different classes.

Indeed, for stabilizer states it is shown in \cite{gross2021schur} that one can test stabilizer states using a constant number of copies.
The reason to mention this result is that this can be done by exploiting the symmetry of the problem, in a way closely analogous to this work. Clifford unitaries act as phase space transformations on a discrete phase space, and $U^{\ot n}$ for Clifford $U$ has a commutant which in particular contains rotations between the copies of phase space. Measuring this additional symmetry yields a constant-copy test for stabilizer states. Further works \cite{arunachalam2025polynomial,bao2025tolerant, MehrabanTahmasbi:2025} have proven stronger results, with different techniques based on additive combinatorics, showing that it is also possible to test whether a state has small overlap with all stabilizer states.

The Gaussianity condition analyzed in this work and conditions similar to it are folklore and have been observed before. For the bosonic case, these are e.g.\ in \cite{mann1991bifactorizable,springer2009conditions}, and investigated more recently from the perspective of a testing algorithm in \cite{bu2025efficient} and \cite{girardi2025gaussian}. Some robustness properties were shown in \cite{cuesta2020stable} (but these error bounds scale with the number of modes in a way that does not lead to a satisfying analysis of the test). The previous best sample complexity bound for testing Gaussianity are based on learning the Gaussian state (e.g. combining the results in \cite{girardi2025gaussian} with optimal tomography of Gaussian states \cite{chen2026optimal} would give a scaling of $O(m^2 \eps^{-2})$, with $\eps$ measuring trace distance from the set of Gaussian states).
In the fermionic case, the operator $\Lambda$ was used in the context of quantum information processing in \cite{bravyi2005lagrangian,melo2013power} to characterize Gaussian states.
By now, there is an extensive body of work on testing fermionic non-Gaussian states, and quantifying non-Gaussianity more generally
\cite{lyu2024fermionic,bittel2025optimal,coffman2025measuring,leone2026fermionic,tarabunga2026fermionic,haug2026practical,sierant2026fermionic,tarabunga2026computable}, and various of these works derive from the same symmetry principles, in particular \cite{haug2026practical,tarabunga2026fermionic} studies the properties of $\Lambda$ as a fermionic Gaussianity test and proves robustness bounds (that still scale with $m$).
The state of the art in the sample complexity to test Gaussianity is $\tilde O(m \eps^{-2})$ \cite{leone2026fermionic} (with $\eps$ measuring trace distance from the set of Gaussian states).

\paragraph{Concurrent work}
Closely related work has been done independently by Iyer, O'Gorman, Parekh, Thompson, and Zhao~\cite{iyer2026optimal} and by Rad, Puig, Hassanandani, Coffman, Eisert, Bravo-Prieto, and Mele~\cite{rad2026testing}.

\section{Preliminaries}
Let $\HH$ be a (possibly infinite-dimensional) Hilbert space.
We denote by $\S(\HH)$ the density operators on $\HH$, i.e.\ the positive semidefinite trace-class operators with trace one, and by $\P(\HH) \subseteq \S(\HH)$ the pure states.
We denote pure states by $\phi = \proj\phi$, where $\ket\phi$ is the corresponding unit vector, unique up to an overall phase, and vice versa.
The expectation value of a bounded operator~$O$ in a state~$\rho$ is denoted by $\langle O \rangle_\rho = \tr(\rho O)$, so $\langle O \rangle_\phi = \ip{\phi | O | \phi}$ for a pure state~$\phi$.

\section{Robustness of property tests from gradient flows}\label{sec:gradient-flow}
In this section we discuss a general method for establishing the robustness of perfectly complete tests.
The goal is to show that if a state is accepted with high probability, then it is near a state that is accepted with probability one.
The key idea is to analyze the gradient flow of the acceptance probability and bound the gradient in terms of the rejection probability (such a bound is also known as a \L{}ojasiewicz inequality).
The result of the analysis is summarized in \cref{thm:bound-fidelity-gradient-flow}.

In this section, let $\HH$ be an arbitrary (possibly infinite-dimensional) Hilbert space.
Let $P$ be any orthogonal projection on $\HH^{\ot 2}$ that commutes with the swap operator between the two copies.
Such a test determines a binary projective measurement on two copies of a pure quantum state, with acceptance probability
\begin{align}\label{eq:p-accept-2-copies}
    p \colon \P(\HH) \to [0,1], \quad p(\psi) \coloneq \ip{P}_{\psi^{\ot 2}} = \ip{\psi^{\ot 2} | P | \psi^{\ot 2}}.
\end{align}
Denote by $\GG$ the set of (``good'') pure states that are accepted with probability one, equivalently $P \psi^{\ot 2} = \psi^{\ot 2}$, and by $F_{\GG}(\psi)$ the maximal fidelity of a given state with one in~$\GG$:
\begin{align}\label{eq:define-F-and-G}
    \GG \coloneqq \{ \psi \in \P(\HH) \, : \, p(\psi) = 1 \} \, ,
    \qquad
    F_{\GG}(\psi) \coloneqq \smash{\sup_{\phi \in \GG}} \, \abs{\ip{\psi|\phi}},
\end{align}
where we assume that $\GG\neq\emptyset$.

We consider the \emph{gradient flow} for the acceptance probability function.
This is the following ordinary differential equation in $\P(\HH)$:
\begin{align}\label{eq:gradient flow abstract}
    \dot \psi_t = \grad p(\psi_t) \qquad (t \geq 0)
\end{align}
starting from some initial state $\psi_0 \in \P(\HH)$.
In finite dimensions, the projective space of pure states~$\P(\HH)$ is a finite-dimensional Riemannian manifold (in fact, a complex K\"ahler manifold):
the tangent space at any point $\psi = \proj\psi$ is naturally identified with the orthogonal complement, $T_\psi \P(\HH) \cong \ket\psi^\perp \subseteq \HH$; then the Riemannian metric is given by $g_\psi(\alpha,\beta) = \Re \braket\alpha\beta$; we write $\norm{\cdot}$ for the corresponding local norm since it is simply the Hilbert space norm in these coordinates.
The corresponding distance on~$\P(\HH)$ is known as the Fubini--Study distance~$d_\mathrm{FS}(\psi,\phi) \coloneqq \arccos \, \abs{\ip{\psi|\phi}}$.
The gradient of the acceptance probability is readily computed using the product rule and the fact that $P$ commutes with the swap operator:
\begin{align}\label{eq:acceptance grad}
    \grad p(\psi)
= 4 (I \ot \bra\psi) P \ket\psi^{\ot 2} - 4 \, p(\psi) \ket\psi
\qquad (\psi=\proj\psi \in \P(\HH)).
\end{align}
The first term would be the gradient of the function~\eqref{eq:p-accept-2-copies} if it were defined on the entire Hilbert space; the second term achieves the projection onto the tangent space of $\P(\HH)$.
In infinite dimensions, $\P(\HH)$ is still a Riemannian Hilbert manifold, resulting in the same formula for the gradient of~$p$; alternatively, one can simply take \cref{eq:acceptance grad} as the definition of the gradient vector field.

\begin{lemma}\label{lem:ode}
The gradient flow~\eqref{eq:gradient flow abstract} has a smooth unique global solution $(\psi_t)_{t\geq0}$ for any initial state~$\psi_0 \in \P(\HH)$.
\end{lemma}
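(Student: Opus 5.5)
We lift the flow to the Hilbert space and work with the unit vector $\ket{\psi_t}$ rather than the projector. Define the vector field on all of $\HH$ by
\begin{align*}
    V(x) \coloneqq 4 (I \ot \bra{x}) P \ket{x}^{\ot 2} - 4 \, \ip{x^{\ot 2} | P | x^{\ot 2}} \, \ket{x},
\end{align*}
where $\bra{x}$ denotes the antilinear functional $\ket{y}\mapsto\braket{x}{y}$. Then $V$ is a real polynomial map of degree three on the real Banach space underlying $\HH$. It is built from bounded multilinear maps: $P$ is bounded with $\norm{P}\le 1$, and the partial contraction $(I\ot\bra{x})$ has norm $\norm{x}$. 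Hence $V$ is smooth, in particular $C^\infty$ in the Fr\'echet sense, and locally Lipschitz. The Picard--Lindel\"of theorem on Banach spaces gives a unique smooth maximal solution $x_t$ of $\dot x_t = V(x_t)$ with $x_0 = \ket{\psi_0}$, defined on some interval $[0,T)$.

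\textbf{Norm conservation.} We check that $\Re\braket{x}{V(x)} = 0$ whenever $\norm{x}=1$, and more usefully compute
\begin{align*}
    \tfrac{d}{dt}\norm{x_t}^2 = 2\Re\braket{x_t}{V(x_t)} = 8\,\ip{x_t^{\ot2}|P|x_t^{\ot2}}\,\bigl(1-\norm{x_t}^2\bigr),
\end{align*}
using $\braket{x}{(I\ot\bra{x})P\ket{x}^{\ot2}} = \ip{x^{\ot 2}|P|x^{\ot2}}$, which is real. This is a scalar ODE for $n_t=\norm{x_t}^2$ with the constant solution $n_t\equiv1$. By uniqueness for this scalar equation, with the coefficient continuous in $t$, we get $\norm{x_t}=1$ for all $t<T$. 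Boundedness of the solution on a bounded set, where $V$ is bounded and Lipschitz, rules out blow-up: $\dot x_t$ is bounded by $\sup_{\norm{x}=1}\norm{V(x)}\le 8$, so $x_t$ is Cauchy as $t\to T$ and extends past $T$. Hence $T=\infty$.

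\textbf{Projection to $\P(\HH)$.} Set $\psi_t = \proj{x_t}$. Because $\ket{x_t}$ is a unit vector and $V(x_t)\perp x_t$ in the real sense, hence in the tangent identification $\ket{x}^\perp$, we compute $\Im\braket{x_t}{V(x_t)}=0$ too. Both terms give real inner products with $x$, so $V(x_t)\in\ket{x_t}^\perp$ exactly, and $\dot\psi_t$ corresponds to $V(x_t)=\grad p(\psi_t)$ by \cref{eq:acceptance grad}. This gives existence of a smooth global solution.

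\textbf{Uniqueness on $\P(\HH)$.} Any solution $\psi_t$ on $\P(\HH)$ admits a smooth horizontal lift $\ket{x_t}$ with $\braket{x_t}{\dot x_t}=0$, obtained locally by fixing the phase. In that gauge the lift satisfies $\dot x_t=V(x_t)$, so it coincides with ours by Banach-space uniqueness. The main care point is this gauge issue: $V$ is phase-covariant, $V(e^{i\theta}x)=e^{i\theta}V(x)$, which makes the projected flow independent of the representative. In infinite dimensions I would simply take the lifted ODE as the definition of a solution, as the paper permits.
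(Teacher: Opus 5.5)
Your proof is correct, but it takes a different route from the paper's.

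\textbf{The paper's route.} The paper works intrinsically on $\P(\HH)$. It invokes Picard--Lindel\"of for smooth vector fields on Banach manifolds to obtain a maximal solution on $[0,T)$. It then rules out $T<\infty$ using only the gradient structure: $\frac{d}{dt}p(\psi_t)=\norm{\dot\psi_t}^2$, Cauchy--Schwarz, and $0\le p\le 1$ together give $d_{\mathrm{FS}}(\psi_{t_0},\psi_{t_1})^2\le t_1-t_0$, hence the Cauchy property as $t\uparrow T$.

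\textbf{Your route.} You lift to a polynomial ODE $\dot x=V(x)$ on the Banach space $\HH$, so only the Banach-space Picard--Lindel\"of theorem is needed. You then show three things:
\begin{itemize}
\item The unit sphere is invariant; your identity $\frac{d}{dt}\norm{x_t}^2=8\langle x_t^{\otimes 2}|P|x_t^{\otimes 2}\rangle(1-\norm{x_t}^2)$ is right.
\item $\langle x|V(x)\rangle=0$ exactly on the sphere, so the lift is horizontal and projects to a solution of the gradient flow.
\item Blow-up is excluded by the uniform bound $\norm{V}\le 8$ on the sphere, rather than by the energy identity.
\end{itemize}

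\textbf{Comparison.} The price of your route is extra bookkeeping for uniqueness on $\P(\HH)$: phase covariance $V(e^{i\theta}x)=e^{i\theta}V(x)$ and the existence of horizontal lifts of arbitrary solution curves. The paper gets this for free by working on the manifold. The gain is that everything reduces to a concrete ODE on a Banach space, and you obtain an explicit horizontal lift of the flow. The paper's blow-up argument is more general, since it works for any gradient flow of a bounded function. Yours instead relies on the specific bound on $V$, which is fine here.

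\textbf{Small slip.} $V$ is not cubic: the term $\langle x^{\otimes 2}|P|x^{\otimes 2}\rangle\, x$ has degree five in $x$. This does not affect smoothness or the Lipschitz property on bounded sets.
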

\begin{proof}
The standard local existence and uniqueness theorem for ODEs given by smooth vector fields (such as $\grad p$) due to Picard--Lindel\"of extends to smooth infinite-dimensional manifolds modeled on Banach spaces, see \cite[\oldS{}1.10]{moore2017introduction} or \cite[\oldS{}IV]{lang2012fundamentals}.
By gluing together local solutions, we obtain a smooth unique maximal solution $(\psi_t)_{t\in[0,T)}$ for some $T \in (0,\infty]$.
We claim that $T = \infty$.
Indeed, suppose that $T<\infty$.
Then, for any $0 \leq t_0 \leq t_1 < T$, we have
\begin{align*}
    d_{\mathrm{FS}}(\psi_{t_0}, \psi_{t_1})^2
&\leq \left( \int_{t_0}^{t_1} \norm{\dot\psi_t} \dd t \right)^2
\leq (t_1 - t_0) \int_{t_0}^{t_1} \norm{\dot\psi_t}^2 \, \dd t
= (t_1 - t_0) \int_{t_0}^{t_1} \frac{\dd p(\psi_t)}{\dd t} \, \dd t \\
&= (t_1 - t_0) \left( p(\psi_{t_1}) - p(\psi_{t_0}) \right)
\leq t_1 - t_0,
\end{align*}
where we first used the fact that the distance is upper bounded by the length of any curve, then the Cauchy--Schwarz inequality,
next that $\frac{\dd p(\psi_t)}{\dd t} = \norm{\grad p(\psi_t)}^2 = \norm{\dot\psi_t}^2$, which holds for any gradient flow directly from the definition~\eqref{eq:gradient flow abstract} and the chain rule, and finally that $0 \leq p \leq 1$.
Thus, $\psi_t$ is Cauchy as $t\uparrow T$.
Because a Hilbert space and hence also~$\P(\HH)$ is complete, the solution can be continued to $t=T$ and, by local existence and uniqueness, to a neighborhood of~$T$.
This contradicts the maximality of $T$.
\end{proof}

Note that the acceptance probability and its gradient can be written as
\begin{align*}
    p(\psi) = \ip{\psi | K_\psi | \psi} \quad\text{and}\quad \grad p(\psi) = 4 \left( K_\psi - p(\psi) I \right) \ket\psi,
\end{align*}
using \cref{eq:acceptance grad},
where $K_\psi$ is the positive semidefinite contraction on $\HH$ defined by
\begin{align*}
    K_\psi \coloneqq (I \ot \bra\psi) P (I \ot \ket\psi).
\end{align*}
In particular, the squared norm of the gradient at any point $\psi \in \P(\HH)$ has a pleasant interpretation in terms of the variance of this operator:
we have $\norm{\grad p(\psi)}^2 = 16 ( \ip{K_\psi^2}_\psi - \ip{K_\psi}_\psi^2 )$ or
\begin{align}\label{eq:grad via p and q}
    \norm{\grad p}^2 = 16 \left( q - p^2 \right),
    \quad\text{where}\quad
    p(\psi) = \ip{\psi | K_\psi | \psi} \text{ and }
    q(\psi) \coloneqq \ip{\psi | K_\psi^2 | \psi}.
\end{align}
In particular, $\psi$ is a critical point of the acceptance probability if and only if $K_\psi \!\ket{\psi} = p(\psi)\ket{\psi}$.
Unfortunately the operator $K_\psi$ depends on $\psi$, which makes it difficult to reason about directly.

Instead, we use a symmetrization trick and consider the following state-independent operator on three copies of the Hilbert space:
\begin{equation}\label{eq:def-S}
    S \coloneqq \frac13 \left( P_{12} + P_{13} + P_{23} \right),
\end{equation}
on $\HH^{\ot 3}$, where $P_{ab}$ acts as $P$ on copies $a$ and $b$ (the order of indices does not matter because $P$ commutes with the swap operator) and as the identity on the third system.
The first and second moments of $S$ encode the first and second moments of $K_\psi$ in a transparent way.

\begin{lemma}\label{lem:three-copy-moments}
Let $p, q \colon \P(\HH) \to [0,1]$ be defined as in \cref{eq:p-accept-2-copies,eq:grad via p and q}.
Then, for every $\psi \in \P(\HH)$, we have
\begin{align*}
    \ip{S}_{\psi^{\ot 3}} = p(\psi)
    \quad\text{and}\quad
    \ip{S^2}_{\psi^{\ot 3}} = \frac13\left( p(\psi) + 2q(\psi) \right).
\end{align*}
\end{lemma}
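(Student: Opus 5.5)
The plan is a direct expansion: compute both moments of $S$ term by term in the product state $\ket\psi^{\ot 3}$, using only that each $P_{ab}$ is an orthogonal projection and that $P$ commutes with the swap. Each $P_{ab}$ is then invariant under relabeling of copies.

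\emph{First moment.} For any pair $a<b$ we have
\[
\ip{P_{ab}}_{\psi^{\ot 3}} = \ip{\psi^{\ot 2}|P|\psi^{\ot 2}}\,\braket{\psi}{\psi} = p(\psi),
\]
since the third factor sees the identity. Averaging the three equal terms gives $\ip{S}_{\psi^{\ot 3}} = p(\psi)$.

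\emph{Second moment.} Write
\[
9S^2 = \sum_{(a,b)} P_{ab}^2 + \sum_{(a,b)\neq(c,d)} P_{ab}P_{cd}.
\]
The diagonal terms satisfy $P_{ab}^2=P_{ab}$, so they contribute $3p(\psi)$. There are six off-diagonal ordered pairs. Each pair of distinct unordered pairs among three copies shares exactly one copy. By permuting the copies, which leaves $\psi^{\ot 3}$ invariant, together with swap-invariance of $P$, each such term has the same expectation as $\ip{P_{12}P_{13}}_{\psi^{\ot 3}}$ or its complex conjugate $\ip{P_{13}P_{12}}_{\psi^{\ot 3}}$.

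The key identity is
\[
\ip{\psi^{\ot 3}|P_{12}P_{13}|\psi^{\ot 3}} = \ip{\psi|K_\psi^2|\psi} = q(\psi).
\]
To see this, contract the second and third copies against $\bra\psi$ and $\ket\psi$. Note that $(I\ot I\ot\bra\psi)P_{12}$ acts as $P$ on copies 1 and 2. Pairing copy 2 with $\psi$ on the left produces $K_\psi$ acting on copy 1. Likewise $P_{13}$, with copy 3 paired against $\psi$ on the right, produces $K_\psi$. The middle contraction over copy 2 on the $P_{12}$ side and copy 3 on the $P_{13}$ side then links the two.

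This linking step is where I would be careful, writing it with explicit sums. Insert a resolution of the identity on copy 2 between $P_{12}$ and $P_{13}$. This uses that $P_{13}$ acts trivially on copy 2 and $P_{12}$ acts trivially on copy 3. Concretely,
\[
P_{12}P_{13}(\ket\psi_1\ket\psi_2\ket\psi_3) = P_{12}\bigl(P\ket{\psi\psi}\bigr)_{13}\ket\psi_2 .
\]
Taking the inner product with $\bra\psi_2\bra\psi_3$ first on copy 3 yields $(I\ot\bra\psi)P\ket{\psi\psi} = K_\psi\ket\psi$ on copy 1, tensored with $\ket\psi_2$. Then applying $P_{12}$ and pairing copies 1 and 2 with $\bra{\psi}\bra\psi$ gives $\bra\psi_1 K_\psi\,K_\psi\ket\psi_1 = q(\psi)$. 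Here $K_\psi$ is defined with the partner copy in the second slot, which is legitimate by swap symmetry.

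Since $q$ is real, all six cross terms equal $q(\psi)$. Hence $9\ip{S^2}_{\psi^{\ot3}} = 3p+6q$, which is the claimed formula $\ip{S^2}_{\psi^{\ot 3}} = \tfrac13\bigl(p(\psi)+2q(\psi)\bigr)$.
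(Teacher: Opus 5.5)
Your proposal is correct and follows the paper's proof. The diagonal terms give $p/3$ via $P_{ab}^2=P_{ab}$, and the six cross terms are reduced to one representative equal to $q(\psi)$ using copy-permutation invariance of $\psi^{\ot 3}$ and swap symmetry of $P$. The paper evaluates its representative $P_{12}P_{23}$ as the inner product $\ip{P_{12}\psi^{\ot 3} | P_{23}\psi^{\ot 3}} = \norm{K_\psi\ket\psi}^2$, whereas you contract copies one at a time to get $\ip{\psi|K_\psi^2|\psi}$ directly. Your ``complex conjugate'' hedge is unnecessary, since $S_3$ already maps $(P_{12},P_{13})$ onto each of the six ordered cross terms.
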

\begin{proof}
    It is clear that $\ip{S}_{\psi^{\ot 3}} = p(\psi)$ because we have $\ip{P_{ab}}_{\psi^{\ot 3}} = p(\psi)$ for all $a\neq b$.
    Next, if we expand~$\ip{S^2}_{\psi^{\ot 3}}$, there are three terms corresponding to $\frac19 P_{ab}^2 = \frac19 P_{ab}$; their sum results in the term $p(\psi)/3$.
    For the cross terms, we have
    \begin{align*}
        \ip{\psi^{\ot 3} | P_{12} P_{23} | \psi^{\ot 3}}
    = \ip{P \psi^{\ot 2} \ot \psi \,|\, \psi \ot P \psi^{\ot 2}}
    = \norm{(I \ot \bra\psi) P \ket\psi^{\ot 2}}^2
    = \norm{K_\psi \ket\psi}^2
    = q(\psi)
    \end{align*}
    using the permutation symmetry of $P$ in the second step.
    Because $\psi^{\ot 3}$ is in the symmetric subspace, all six cross terms are equal to $1/9$ times the above.
\end{proof}

It is clear that $S$ is a positive semidefinite contraction, hence its spectrum is in~$[0,1]$.
Moreover, $S \ket\psi^{\ot 3} = \ket\psi^{\ot 3}$ if and only if~$\psi \in \GG$; thus its maximal eigenvalue is~$1$.
We now show that if $S$ has a good spectral gap below this maximal eigenvalue then the gradient can be lower bounded in terms of the success probability.
The following bound can be understood as an effective form of a \emph{\L{}ojasiewicz inequality}~\cite{lojasiewicz1965ensembles}.%
\footnote{Because the operator $S$ is only evaluated on vectors in the symmetric subspace, the same analysis holds for the spectral gap of the restriction of $S$ on the symmetric subspace~$\Sym^3(\HH)$.}

\begin{lemma}\label{lem:bound-gradient}
    Suppose that $\spec(S) \subseteq [0,\alpha] \cup \{1\}$ for some $\alpha\in[0,1]$.
    Then:
    \begin{align*}
        \norm{\grad p}^2 \geq 16 \left( p - \tfrac32\alpha \right) \left( 1 - p \right).
    \end{align*}
\end{lemma}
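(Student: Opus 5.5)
We combine \cref{lem:three-copy-moments} with the spectral assumption on $S$ through a second-moment (variance) inequality. Fix $\psi \in \P(\HH)$ and let $\mu$ be the spectral measure of $S$ in the state $\ket\psi^{\ot 3}$. This is a probability measure supported on $\spec(S) \subseteq [0,\alpha] \cup \{1\}$. By \cref{lem:three-copy-moments} its first two moments are
\begin{align*}
\int \lambda \, \dd\mu(\lambda) = p \qquad\text{and}\qquad \int \lambda^2 \, \dd\mu(\lambda) = \tfrac13 (p + 2q).
\end{align*}
Since $\norm{\grad p}^2 = 16(q - p^2)$ by \cref{eq:grad via p and q}, it suffices to bound $q$ from below using these moments.

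The key observation is pointwise. For every $\lambda \in [0,\alpha] \cup \{1\}$ we have $(1-\lambda)(\alpha - \lambda) \le 0$ for $\lambda \in [0,\alpha]$, and this expression vanishes at $\lambda = 1$. Hence $(1-\lambda)(\alpha-\lambda) \le 0$ on the whole spectrum. Expanding gives $\lambda^2 \le (1+\alpha)\lambda - \alpha$ on the spectrum, so integrating against $\mu$ yields
\begin{align*}
\tfrac13(p + 2q) \le (1+\alpha) p - \alpha.
\end{align*}
But this is an upper bound on $q$, which is the wrong direction. The upper bound is not the useful inequality; we need a lower bound on the second moment.

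For the lower bound we use the fact that $\lambda \ge \lambda^2$ on $[0,1]$, with near-equality only close to $1$. Write $w = \mu(\{1\})$. Then
\begin{align*}
p \le w + \alpha(1-w) \qquad\text{and}\qquad \tfrac13(p + 2q) \ge w.
\end{align*}
The first inequality gives $w \ge \frac{p-\alpha}{1-\alpha}$. This is valid, but it does not produce the stated form, so we instead use the pointwise inequality $\lambda^2 \ge \lambda - (1-\alpha)\ldots$. More cleanly, on $[0,\alpha]$ we have $\lambda - \lambda^2 \le \lambda(1-\lambda)$, and we aim for a bound of the shape
\begin{align*}
\lambda^2 \ge c\lambda - d \qquad \text{on } [0,\alpha]\cup\{1\}.
\end{align*}

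The simplest route is to compare $\int(\lambda - \lambda^2)\,\dd\mu$ with $\int(1-\lambda)\,\dd\mu$. On the spectrum we have $\lambda - \lambda^2 = \lambda(1-\lambda) \le \alpha(1-\lambda)$, which holds both for $\lambda \le \alpha$ and at $\lambda = 1$. Integrating gives
\begin{align*}
p - \tfrac13(p + 2q) \le \alpha(1-p),
\end{align*}
which is equivalent to
\begin{align*}
q \ge p - \tfrac32 \alpha (1-p).
\end{align*}
Substituting into the gradient formula,
\begin{align*}
q - p^2 \ge p(1-p) - \tfrac32\alpha(1-p) = \left(p - \tfrac32\alpha\right)(1-p).
\end{align*}
Multiplying by $16$ gives exactly the claimed inequality $\norm{\grad p}^2 \ge 16\left(p - \tfrac32\alpha\right)(1-p)$.

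The main point requiring care is the spectral-measure step in possibly infinite dimensions. Since $S$ is a bounded self-adjoint operator, the spectral theorem applies, and the inequality $\lambda(1-\lambda) \le \alpha(1-\lambda)$ on $\spec(S)$ lifts to the operator inequality
\begin{align*}
S - S^2 \le \alpha(I - S).
\end{align*}
Evaluating this operator inequality in the state $\psi^{\ot 3}$ gives the integrated bound above without needing to work with $\mu$ explicitly.
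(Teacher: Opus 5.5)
Your final argument is correct and is the paper's proof: the operator inequality $S - S^2 \le \alpha(I - S)$ is exactly the paper's $(S-I)(S-\alpha) \ge 0$, evaluated on $\psi^{\ot 3}$ using \cref{lem:three-copy-moments} to get $q \ge p - \tfrac32\alpha(1-p)$. Your abandoned first attempt has a sign error: on $[0,\alpha]$ both $1-\lambda$ and $\alpha-\lambda$ are nonnegative, so $(1-\lambda)(\alpha-\lambda) \ge 0$, and with the sign fixed that attempt already gives the needed lower bound $\tfrac13(p+2q) \ge (1+\alpha)p - \alpha$ directly; delete it, together with the unfinished $w = \mu(\{1\})$ detour, from the write-up.
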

\begin{proof}
Because $\spec(S) \subseteq [0,1]$, the assumption on the spectrum of $S$ is equivalent to $(S - I)(S - \alpha) \geq 0$.
Thus, we have, for every $\psi\in\P(\HH)$,
\begin{align*}
    0
\leq \ip{ (S - I)(S - \alpha) }_{\psi^{\ot 3}}
= \frac13\left( p(\psi) + 2q(\psi) \right) - (1 + \alpha) p(\psi) + \alpha,
\end{align*}
using \cref{lem:three-copy-moments}.
Rearranging the terms gives $q \geq p - \frac32\alpha (1 - p)$; the claim follows from \cref{eq:grad via p and q}.
\end{proof}

If $p > 3\alpha/2$ (so in particular $\alpha < 2/3$), then the above bound is nontrivial.
We may use it to bound the distance to~$\GG$ in terms of the acceptance probability of the test.

\begin{theorem}[Robust property testing from spectral gap]\label{thm:bound-fidelity-gradient-flow}
Let $P$ be an orthogonal projection on $\HH^{\ot 2}$ that commutes with the swap operator, and set $S \coloneqq \frac13 ( P_{12} + P_{13} + P_{23} )$.
Suppose $\spec(S) \subseteq [0,\alpha] \cup \{1\}$ for some $\alpha\in[0,2/3)$, and set $C_\alpha \coloneqq 1/(1 - \tfrac32\alpha)$.
Then, for all $\psi\in\P(\HH)$ with~$p(\psi) \coloneqq \ip{\psi^{\ot2} | P | \psi^{\ot 2}} > \tfrac32\alpha$, we have
\begin{align*}
    \arccos F_\GG(\psi) \leq \frac12 \arcsin\sqrt{C_\alpha \left( 1-p(\psi) \right)}.
\end{align*}
As a consequence, the following holds for all $\psi\in\P(\HH)$ and $\eps\geq0$:
\begin{align*}
    p(\psi) \geq 1 - \eps \quad\Rightarrow\quad F_\GG^2(\psi) \geq 1 - C_\alpha \eps.
\end{align*}
\end{theorem}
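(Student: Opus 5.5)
The plan is to run the gradient flow of \cref{lem:ode} from $\psi_0 \coloneqq \psi$ and bound the length of its trajectory, using the \L{}ojasiewicz-type inequality of \cref{lem:bound-gradient}. Write $a \coloneqq \tfrac32\alpha < 1$, so that $C_\alpha = 1/(1-a)$, and set $p_t \coloneqq p(\psi_t)$. Along the flow we have $\dot p_t = \norm{\grad p(\psi_t)}^2 \geq 0$. Hence $p_t$ is nondecreasing, and $p_t \geq p_0 > a$ for all $t$. By \cref{lem:bound-gradient},
\begin{align*}
    \norm{\grad p(\psi_t)}^2 \geq 16 (p_t - a)(1 - p_t).
\end{align*}

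\textbf{Step 1 (length bound).} If $p_t = 1$ at some finite time, then $\psi_t$ is already in $\GG$; this case is trivial. Otherwise $\norm{\grad p(\psi_t)} > 0$ for all $t$, and I would change variables from $t$ to $p$ using $\dd p = \norm{\grad p}^2 \dd t$. This gives
\begin{align*}
    \int_0^\infty \norm{\dot\psi_t}\,\dd t = \int_{p_0}^{p_\infty} \frac{\dd p}{\norm{\grad p}} \leq \frac14\int_{p_0}^{1} \frac{\dd p}{\sqrt{(p-a)(1-p)}}.
\end{align*}
Next substitute $p = a + (1-a)\sin^2\theta$. Then $\dd p = 2(1-a)\sin\theta\cos\theta\,\dd\theta$ and $\sqrt{(p-a)(1-p)} = (1-a)\sin\theta\cos\theta$, so the integrand becomes $2\,\dd\theta$. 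The upper limit is $\theta = \pi/2$, and the lower limit $\theta_0$ satisfies $\cos^2\theta_0 = (1-p_0)/(1-a) = C_\alpha(1-p_0) \leq 1$. The total length is therefore at most $\tfrac12(\tfrac\pi2 - \theta_0) = \tfrac12 \arcsin\sqrt{C_\alpha(1-p_0)}$.

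\textbf{Step 2 (convergence to $\GG$).} This is the main technical point. Finite length together with the bound $d_{\mathrm{FS}}(\psi_s,\psi_t) \leq \int_s^t \norm{\dot\psi}$ shows that $\psi_t$ is Cauchy, so it converges to some $\psi_\infty \in \P(\HH)$ by completeness. It remains to show $p(\psi_\infty) = 1$. Since $p_t$ is monotone and bounded, it converges to some $p_\infty$. Suppose $p_\infty < 1$. Then $\dot p_t \geq 16(p_0 - a)(1 - p_\infty) > 0$ uniformly in $t$, which forces $p_t \to \infty$, a contradiction. Hence $p_\infty = 1$. Because $p$ is continuous in the Fubini--Study topology (it is a polynomial in $\psi$), we get $p(\psi_\infty) = 1$, i.e.\ $\psi_\infty \in \GG$. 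Consequently
\begin{align*}
    \arccos F_\GG(\psi) \leq d_{\mathrm{FS}}(\psi_0,\psi_\infty) \leq \tfrac12\arcsin\sqrt{C_\alpha(1-p_0)},
\end{align*}
which is the first claim.

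\textbf{Step 3 (consequence).} If $C_\alpha\eps \geq 1$, the claim is trivial. Otherwise $p(\psi) \geq 1 - \eps > 1 - 1/C_\alpha = a$, so the first part applies. Put $x = \arcsin s$ with $s = \sqrt{C_\alpha(1-p)} \leq \sqrt{C_\alpha \eps}$. Then
\begin{align*}
    F_\GG^2 \geq \cos^2(x/2) = \tfrac12\bigl(1+\sqrt{1-s^2}\bigr) \geq 1 - \tfrac{s^2}{2} \geq 1 - C_\alpha\eps.
\end{align*}
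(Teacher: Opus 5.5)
Your proposal is correct and follows essentially the same route as the paper: run the gradient flow, apply the \L{}ojasiewicz bound from \cref{lem:bound-gradient}, bound the trajectory length via the change of variables $t\mapsto p_t$ (your substitution $p=a+(1-a)\sin^2\theta$ just evaluates the same integral), and finish with the half-angle identity. The only difference is that the paper rules out $p_t=1$ at finite time using $q\le p$, which gives $1-p_t\ge e^{-16t}(1-p_0)$, whereas you call that case trivial; this is fine, since the same length estimate applies on $[0,t_1]$.
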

\begin{proof}
Let $\psi\in\P(\HH)$.
Let $(\psi_t)_{t\geq0}$ denote the unique global solution to the gradient flow~\eqref{eq:gradient flow abstract} with initial state $\psi_0=\psi$ (\cref{lem:ode}).
Let us abbreviate by $p_t \coloneq p(\psi_t)$ the value of the objective along the flow.
If~$p_0 = 1$, then $\psi \in \GG$ and the theorem is immediate.
If $p_0 \leq \frac32\alpha$, then the hypothesis of the first inequality is not satisfied and the second one is vacuous.
Thus, we assume that $\frac32\alpha < p_0 < 1$.
Then, using \cref{lem:bound-gradient},
\begin{align}\label{eq:dot p t}
\dot p_t
= \norm{\dot \psi_t}^2
= \norm{\grad p(\psi_t)}^2
\geq 16 \left( p_t - \tfrac32\alpha \right) \left( 1 - p_t \right)
\geq 16 \left( p_0 - \tfrac32\alpha \right) \left( 1 - p_t \right).
\end{align}
The first equality shows that the objective~$p_t$ is non-decreasing (a general property of the gradient flow already used above), which in turn implies the final inequality.

Next, we claim that $p_t$ is \emph{strictly} increasing from $p_0$ at time~$t=0$ to $\lim_{t\to\infty} p_t = 1$.
The latter follows because $p_t$ is bounded by one and $p_0 > \tfrac32\alpha$, hence the right-hand side of~\eqref{eq:dot p t} must converge to zero.
The former also follows from~\eqref{eq:dot p t} if we can show that $p_t < 1$ for all $t \geq 0$.
Indeed, $q \leq p$ and hence, by \cref{eq:grad via p and q}, $\dot p_t = \norm{\grad p(\psi_t)}^2 \leq 16 (p_t - p_t^2) \leq 16 ( 1 - p_t )$, hence $1 - p_t \geq e^{-16t} (1 - p_0) > 0$.
This proves the claim.

Next we show that $\lim_{t\to\infty} \psi_t$ exists as well.
It suffices to argue that $(\psi_t)_{t\geq0}$ is Cauchy as $t\to\infty$.
Indeed, for any two $t_0 \leq t_1$, we have
\begin{align*}
    d_{\mathrm{FS}}(\psi_{t_0}, \psi_{t_1})
&\leq \int_{t_0}^{t_1} \norm{\dot\psi_t} \, \dd t
= \int_{t_0}^{t_1} \norm{\grad p(\psi_t)} \, \dd t
= \int_{p_{t_0}}^{p_{t_1}} \frac{\dd p}{\norm{\grad p(\psi_{t(p)})}} \\
&\leq \frac14 \int_{p_{t_0}}^{p_{t_1}} \frac{\dd p}{\sqrt{(p - \frac32\alpha)(1 - p)}}
\leq \frac14 \int_{p_{t_0}}^1 \frac{\dd p}{\sqrt{(p - \frac32\alpha)(1 - p)}}
= \frac12 \arcsin\sqrt{\frac{1-p_{t_0}}{1-\frac32\alpha}},
\end{align*}
where we first bound the distance by the length of the gradient flow trajectory, then we perform the change of variables $[0,\infty) \ni t \mapsto p_t \in [p_0,1)$, which by the above discussion is strictly increasing, with $\dd p = \dot p_t \, \dd t = \norm{\grad p(\psi_t)}^2 \, \dd t$, next we use \cref{eq:dot p t} to upper bound the integrand, and finally we evaluate the integral and use that $p_{t_1} \leq 1$.
Because $\lim_{t\to\infty} p_t=1$, the above shows that $(\psi_t)$ is Cauchy as $t\to\infty$.
Thus, $\psi_\infty \coloneqq \lim_{t\to\infty} \psi_t$ exists, satisfies $p(\psi_\infty)=1$ and hence $\psi_\infty \in \GG$, and the above computation (with~$t_0=0$, so $p_0 = p(\psi)$, and $t_1=\infty$) shows that
\begin{align*}
\arccos F_{\GG}(\psi)
\leq d_{\mathrm{FS}}(\psi, \psi_\infty)
\leq \int_0^\infty \norm{\dot\psi_t} \, \dd t
\leq \frac12 \arcsin\sqrt{\frac{1-p(\psi)}{1-\frac32\alpha}}.
\end{align*}
This proves the first estimate.
The second estimate is trivial for $p(\psi) \leq \frac32\alpha$, and otherwise follows from the first because $\cos^2(\frac12\arcsin\sqrt{x}) = \frac12 (1 + \cos\arcsin\sqrt{x}) = \frac12 (1 + \sqrt{1 - x}) \geq 1 - x$ for $x\in[0,1]$.
\end{proof}

\section{Spectral gap for the Kac walk}\label{sec:kac}
While $t$ copies of any pure state have an $S_t$-permutation symmetry, Gaussian states are characterized by having an enhanced symmetry, given by suitable groups of orthogonal transformations in~$O(t)$.
This will be explained in more detail in \cref{sec:fermions,sec:bosons}.
Here we discuss the common algebraic structure and explain how the relevant spectral gap can be extracted from the literature on the so-called \emph{Kac random walk}~\cite{kac1956foundations}.

The situation both for fermionic and mean-zero bosonic Gaussian states is as follows.
The group $\SO(t)$ acts on the $t$-copy Hilbert space.
Gaussian states are characterized by their invariance already given~$t=2$ copies, so one obtains a perfectly complete test by defining $P$~as the projection onto the $\SO(2)$-invariant subspace.
How can we obtain the symmetrized operator $S$ as in \cref{eq:def-S}, whose spectral gap determines the robustness of the test?
For this, one may consider the action of~$\SO(3)$, which acts on the three-copy Hilbert space and has three natural subgroups isomorphic to $\SO(2)$, corresponding to rotations about the $x$-, $y$-, and $z$-axes, respectively; restricting the action to these subgroups corresponds precisely to the $\SO(2)$-action on two of the three copies.

Thus we have the following general situation:
Let $\pi \colon \SO(3) \to \U(\HH)$ be a (strongly) continuous representation on a (possibly infinite-dimensional) Hilbert space~$\HH$.
Let $\Pi_x$, $\Pi_y$, and $\Pi_z$ denote the projections on the subspaces invariant under rotations about each of the coordinate axes.
Then we wish to compute the spectral gap of the operator
\begin{align}\label{eq:def-T}
    T_\pi = \frac13 \left( \Pi_x + \Pi_y + \Pi_z \right).
\end{align}
For the natural action of $\SO(3)$ on $\HH = \mathrm{L}^2(\S^2)$, where $\S^2 \subseteq \RR^3$ is the two-dimensional unit sphere, the operator $T_\pi$ is precisely the transition operator of the Kac walk.
Its spectral gap has been determined in~\cite{carlen2000many,diaconis2000bounds,carlen2003determination,maslen2003eigenvalues,caputo2008spectral} more generally for the Kac walk on~$\S^{t-1}$: for $t=3$ specifically it is given by~$5/12$.
See, e.g., \cite[Thm.~3.1, $N=3$]{carlen2000many} for the proof under a simplifying assumption (permutation-symmetric functions) and \cite[Thm.~1.3, $N=3$ and $\rho=1/(2\pi)$]{carlen2003determination} for the general case.
Since the representation $\mathrm{L}^2(\S^2)$ contains \emph{all} irreducible representations of $\SO(3)$, by the Peter--Weyl theorem or concretely as the spherical harmonics, this lower bounds the spectral gap for an arbitrary representation.
Thus one obtains the following theorem.
To keep the paper self-contained, we give a simple derivation using basic representation theory for the convenience of the reader.

\begin{theorem}[{e.g., \cite[Thm.~1.3, $N=3$, $\rho=\frac1{2\pi}$]{carlen2003determination}}]\label{thm:kac-walk}
    The transition operator of the Kac walk has spectral gap $5/12$.
    As a consequence, for any representation $\pi$, we have $\spec(T_\pi) \subseteq [0,\frac7{12}] \cup \{1\}$.
\end{theorem}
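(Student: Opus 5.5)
The plan is to reduce to irreducible representations and compute the spectrum of $T_\pi$ on each one explicitly. By the Peter--Weyl theorem, a strongly continuous unitary representation of the compact group $\SO(3)$ decomposes as an orthogonal direct sum of irreducibles $V_\ell$ ($\ell=0,1,2,\dots$, $\dim V_\ell = 2\ell+1$). Each projection $\Pi_x,\Pi_y,\Pi_z$ is the average of $\pi$ over a closed subgroup, so it preserves every isotypic summand. Hence $T_\pi$ is block diagonal, and it suffices to show $\spec(T_{V_\ell}) \subseteq [0,\frac7{12}]\cup\{1\}$ for each $\ell$. On $V_0$ we have $T=1$.

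Fix $\ell\ge1$. The subgroup of rotations about the $z$-axis acts on $V_\ell$ with weights $-\ell,\dots,\ell$, each of multiplicity one, so $\Pi_z$ is the rank-one projection onto the weight-zero vector $v_z$. Let $R_x,R_y$ be rotations taking $e_z$ to $e_x$ and to $e_y$, and set $v_x=\pi(R_x)v_z$ and $v_y=\pi(R_y)v_z$. Then $\Pi_x$ and $\Pi_y$ are the rank-one projections onto $v_x$ and $v_y$. The operator $T=\frac13\sum_a \ket{v_a}\bra{v_a}$ has the same nonzero eigenvalues as $\frac13 G$, where $G$ is the $3\times3$ Gram matrix of $v_x,v_y,v_z$. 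The key input is the classical fact that the zonal matrix coefficient is a Legendre polynomial, $\ip{v_z|\pi(R)|v_z} = P_\ell(e_z\cdot Re_z)$. Since the three coordinate axes are pairwise orthogonal, this gives $\ip{v_a|v_b}=P_\ell(0)\eqqcolon c$ for $a\ne b$. For example, $\ip{v_x|v_y}=\ip{v_z|\pi(R_x^{-1}R_y)|v_z}$ with $R_x^{-1}R_y e_z\cdot e_z = e_y\cdot e_x=0$. Therefore $G=(1-c)I+cJ$, with eigenvalues $1+2c$ and $1-c$ (twice). The nonzero eigenvalues of $T$ thus lie among $\frac{1+2c}{3}$ and $\frac{1-c}{3}$, and $T$ has eigenvalue $0$ on the remaining dimensions.

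It remains to evaluate $c=P_\ell(0)$.
\begin{itemize}
\item For odd $\ell$, $c=0$, so the eigenvalues are $\frac13$ and $0$.
\item For $\ell=2$, $c=-\frac12$, so the eigenvalues are $0$ and $\frac12$.
\item For even $\ell=2k\ge4$, $|P_{2k}(0)|=\binom{2k}{k}4^{-k}$. This is decreasing in $k$, so $|c|\le P_4(0)=\frac38$, and every eigenvalue is at most $\frac{1+2\cdot 3/8}{3}=\frac7{12}$. Equality holds at $\ell=4$.
\end{itemize}
Hence $\spec(T_\pi)\subseteq[0,\frac7{12}]\cup\{1\}$, with the eigenvalue $1$ arising only from trivial summands. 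Applying this to $\mathrm{L}^2(\S^2)=\bigoplus_\ell V_\ell$ (spherical harmonics, every $\ell$ occurring) shows that the Kac transition operator has second-largest eigenvalue exactly $\frac7{12}$, i.e.\ spectral gap $\frac5{12}$.

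The step that needs the most care is the Gram matrix computation. One must justify that the weight-zero vector is the unique $\SO(2)$-fixed vector in $V_\ell$, identify the matrix coefficient with $P_\ell(e_z\cdot Re_z)$, and check that the phases of $v_x,v_y,v_z$ can be chosen so that all off-diagonal entries are the same real number $c$. Realizing $V_\ell$ as degree-$\ell$ spherical harmonics with $v_z\propto P_\ell(z)$ makes all of these points concrete.
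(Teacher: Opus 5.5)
Your proposal is correct and follows the paper's proof essentially step by step. Both reduce to nontrivial irreducibles, identify each of $\Pi_x,\Pi_y,\Pi_z$ as a rank-one projection onto a zero-weight vector, diagonalize the $3\times 3$ Gram matrix with equal off-diagonal entries, and check that the maximum $7/12$ is attained at $\ell=4$ with the off-diagonal entries decreasing in absolute value afterwards. The only cosmetic difference is how that entry is computed: you take it as the zonal spherical function $P_\ell(0)$ in the spherical-harmonic model, while the paper obtains the same number $a_j=(-1)^{j/2}2^{-j}\binom{j}{j/2}$ directly in the spin model $\Sym^{2j}(\CC^2)$ using Pauli rotations.
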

\begin{proof}
    For a trivial representation, $T_\pi = I$ and hence has spectrum $\{1\}$.
    We will show that as $\pi$ ranges over the nontrivial irreducible representations of $\SO(3)$, the largest eigenvalue of~$T_\pi$ is~$7/12$.
    Recall that the irreducible representations of $\SO(3)$ correspond to the irreducible $\SU(2)$-representations with \emph{integer} spin~$j\in\NN$, i.e., the restriction of the $(2j)$-th tensor action on $\HH = \Sym^{2j}(\CC^2)$.
    We are interested in the case $j>0$.
    The three subgroups isomorphic to $\SO(2) \cong \U(1)$ corresponding to the coordinate axis rotations are generated by ($\mathrm{i}$ times) the Pauli matrices~$X,Y,Z$.
    The Pauli matrices act with eigenvalues~$\{-2j,\dots,2j-2,2j\}$; each eigenspace has multiplicity one.
    Thus, $\Pi_z = \ket{0_z}\!\!\bra{0_z}$ is the projection onto the unit vector $\ket{0_z} \propto \ket0^{\ot j}\ket1^{\ot j} + \emph{\text{permutations}}$, which spans the zero eigenspace for the action of~$Z$;
    $\Pi_x = \ket{0_x}\!\!\bra{0_x}$, where $\ket{0_x} \coloneqq R^{\ot 2j} \ket{0_z} \propto \ket+^{\ot j} \ket-^{\ot j} + \emph{\text{permutations}}$ and $R = \frac1{\sqrt2}\begin{psmallmatrix}1 & -1\\1 & 1\end{psmallmatrix}$, since $R$ conjugates $Z$ to $X$;
    and similarly $\Pi_y = \ket{0_y}\!\!\bra{0_y}$, where $\ket{0_y} \coloneqq S^{\ot 2j} \ket{0_x}$ and $S = \operatorname{diag}(e^{-\mathrm{i}\pi/4}, e^{\mathrm{i}\pi/4})$.
    Thus:
    \begin{align*}
        T_\pi = \frac13 \left( \ket{0_x}\!\!\bra{0_x} + \ket{0_y}\!\!\bra{0_y} + \ket{0_z}\!\!\bra{0_z} \right)
    \end{align*}
    The nonzero eigenvalues of $T_\pi$ can be computed in terms of the Gram matrix of the three unit vectors.
    By symmetry, their inner products are all equal, and a short calculation reveals that
    \begin{align*}
        \ip{0_x|0_y} = \ip{0_y|0_z} = \ip{0_x|0_z} = a_j
    \coloneqq \begin{cases}
        0 & \text{if $j$ is odd}, \\
        (-1)^{j/2} 2^{-j} \binom j {j/2} & \text{if $j$ is even}.
        \end{cases}
    \end{align*}
    Hence the maximal eigenvalue of $T_\pi$ equals the maximal eigenvalue of
    \begin{align*}
        \frac13 \begin{pmatrix}
        1 & a_j & a_j \\
        a_j & 1 & a_j \\
        a_j & a_j & 1
        \end{pmatrix}.
    \end{align*}
    If $j$ is odd, then this is simply $I/3$, so the maximal eigenvalue is $1/3 < 7/12$.
    Now suppose that $j$ is even.
    Then the above matrix has eigenvalues $\frac{1 + 2a_j}3$ and $\frac{1-a_j}3$ (with multiplicity two), and its maximal eigenvalue is
    $\lambda_j \coloneqq \max \braces{ \frac{1 + 2a_j}3, \frac{1 - a_j}3 }$.
    For $j=2$, we have $a_2 = -\frac12$ and hence $\lambda_2 = \frac12$.
    For $j=4$, $a_4 = \frac38$ and hence $\lambda_4 = \frac7{12}$, which attains the claimed upper bound.
    For $j\geq6$, we can estimate
    \begin{align*}
        \lambda_j \leq \frac{1 + 2\abs{a_j}}3 < \frac{1 + 2 \abs{a_4}}3 = \lambda_4
    \end{align*}
    because
    \begin{align*}
        \abs*{\frac {a_{j+2}} {a_j}}
    = \frac {\binom{j+2} {j/2+1}} {4 \binom j {j/2}}
    = \frac {j+1}{j+2} < 1
    \end{align*}
    and hence the $\abs{a_j}$ for even $j$ are monotonically decreasing.
    This concludes the proof.
\end{proof}

\section{Testing fermionic Gaussian states}\label{sec:fermions}
We recall the notion of fermions and fermionic Gaussian states and unitaries; see, e.g.,~\cite{bravyi2005lagrangian}.
For $m$ fermionic modes, the Hilbert space is the \emph{fermionic Fock space}
\begin{align*}
    \HH_m
    \coloneqq \mathcal F_\wedge(\CC^m)
    \coloneqq \bigoplus_{k=0}^m\bigwedge^k\CC^m.
\end{align*}
The fermionic creation and annihilation operators~$a_j^\dagger$ and~$a_j$, for $j\in\{1,\ldots,m\}$, act on $\HH_m$ and satisfy the \emph{canonical anticommutation relations} $\{a_j,a_k^\dagger\}=\delta_{jk}\id$ and $\{a_j,a_k\}=\{a_j^\dagger,a_k^\dagger\}=0$ for all $j,k$.
The number operators are $N_j \coloneqq a_j^\dagger a_j$, and $N \coloneqq \sum_jN_j$ is called the total particle number operator.
The \emph{Majorana operators}~$c_1,\dots,c_{2m}$ are defined by $c_{2j-1} \coloneqq a_j^\dagger+a_j$ and $c_{2j} \coloneqq \mathrm{i}(a_j^\dagger-a_j)$.
They are Hermitian unitaries that pairwise anticommute, i.e.,
\[ c_j=c_j^\dagger \quad\text{and}\quad \{c_j,c_k\}=2\delta_{jk}\id. \]
Finally, the parity operator is given by $\Pi \coloneqq (-1)^N = (-\mathrm{i})^m\prod_{j=1}^{2m} c_j$.
It is also a Hermitian unitary, and it anticommutes with every Majorana operator.
We write $\HH_m=\HH_m^+\op\HH_m^-$ for the decomposition of the Hilbert space into the $\pm1$ eigenspaces of~$\Pi$, called the even and odd parity subspaces, respectively.

For every $R\in\SO(2m)$, there is a unitary $U_R$ such that
\begin{align}\label{eq:action fermionic gaussian unitary}
    U_Rc_jU_R^\dagger
    =\sum_{k=1}^{2m}R_{kj}c_k.
\end{align}
This gives a projective representation of $\SO(2m)$ on~$\HH_m$, called the \emph{spin representation}; it lifts to a representation of the double cover~$\Spin(2m)$.
The unitaries~$U_R$ are called \emph{Gaussian unitaries}.
They are generated by Hamiltonians quadratic in the Majorana operators, $H=\mathrm{i}\sum_{j,k=1}^{2m}h_{jk}c_jc_k$, where~$h_{jk} = -h_{kj} \in \RR$, i.e., $h \in \mathfrak{so}(2m)$.
For the subgroup $\SU(m) \subseteq \SO(2m)$, the corresponding Gaussian unitaries are given by the second quantization action~$\Gamma_{\wedge}(u)$ of unitaries~$u\in\SU(m)$, which acts as the restriction of $u^{\ot k}$ on each fixed particle number subspace $\bigwedge^k \CC^m$.

A pure state is called a \emph{fermionic Gaussian state} if it is the unique ground state of a quadratic Hamiltonian.
Within either parity sector, the pure Gaussian states form a single orbit of the Gaussian unitaries, which we denote by~$\GG_m^+$ and~$\GG_m^-$, and we let $\GG_m$ be the set of all Gaussian states.

Concretely, one can identify $\HH_m \cong (\CC^2)^{\ot m}$ using a Jordan--Wigner transformation, which realizes the Majorana operators as certain Pauli operators:
$c_{2k-1} = Z_1 \cdots Z_{k-1} X_k$ and $c_{2k} = Z_1 \cdots Z_{k-1} Y_k$.
Then, $\mathrm{i} c_{2k-1} c_{2k} = i X_k Y_k = -Z_k = 2 N_k - I$.
In particular, the standard basis vectors~$\ket x$ for $x \in \{0,1\}^m$ form a joint eigenbasis of the number operators~$N_j$: we have $N_j \ket x = x_j \ket x$.
The even (odd) Gaussian states are the orbit of any standard basis state with even (odd) Hamming weight under the Gaussian unitaries.


\subsection{A condition for fermionic Gaussianity}
We consider a two-copy test based on a well-known condition for fermionic Gaussianity.
On~$\HH_m^{\ot 2}$, define
\begin{align*}
    \Lambda
    \coloneqq\sum_{j=1}^{2m}c_j\ot c_j = 2\sum_{j=1}^m (a_j^\dagger \ot a_j + a_j \ot a_j^\dagger).
\end{align*}
Let $P$ be the orthogonal projection onto the kernel of $\Lambda$.
This defines the accepting outcome of a binary measurement, with acceptance probability
\[ p(\psi) = \langle P \rangle_{\psi^{\ot 2}} \]
as in \cref{eq:p-accept-2-copies}.
We note a few basic properties.
First of all, $P$ commutes with the swap operator on the copies (in fact, it is supported on the symmetric subspace, see below).
Simultaneous Gaussian conjugation leaves~$\Lambda$ invariant, as is readily verified using \cref{eq:action fermionic gaussian unitary}:
\begin{align}\label{eq:Lambda-Gaussian-invariance}
    U_R^{\ot 2}  \Lambda  U_R^{\dagger, \ot 2} = \Lambda.
\end{align}
Thus the acceptance probability $p(\psi)$ is invariant under Gaussian unitaries.
The operator $\Lambda$ is well-known to characterize Gaussian states~\cite{bravyi2005lagrangian,melo2013power}.
We include the short proof which does not assume a priori that the state has fixed parity.

\begin{lemma}[{\cite[Cor.~1]{melo2013power}}]\label{lem:exact fermions}
Let $\psi\in\P(\HH_m)$.
The following are equivalent:
\begin{enumerate}[(i),itemsep=0pt]
    \item\label{it:f-gaussian} $\psi$ is a pure fermionic Gaussian state (i.e., $\psi \in \GG_m$),
    \item\label{it:f-lambda} $\Lambda\ket\psi^{\ot 2}=0$,
    \item\label{it:f-p} $p(\psi)=1$.
\end{enumerate}
\end{lemma}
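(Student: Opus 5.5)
The equivalence (ii)$\Leftrightarrow$(iii) is immediate from the definitions: $P$ is the orthogonal projection onto $\ker\Lambda$, so $p(\psi)=\norm{P\ket\psi^{\ot2}}^2=1$ holds exactly when $\ket\psi^{\ot2}\in\ker\Lambda$. The substance is (i)$\Leftrightarrow$(ii). I will work throughout in the Jordan--Wigner picture.

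\emph{(i)$\Rightarrow$(ii).} By \cref{eq:Lambda-Gaussian-invariance}, $\Lambda U_R^{\ot2}=U_R^{\ot2}\Lambda$. Every Gaussian state is of the form $U_R\ket x$ for a standard basis vector $\ket x$, up to phase. It therefore suffices to check $\Lambda\ket x\ket x=0$. For this I will use the second form $\Lambda=2\sum_j(a_j^\dagger\ot a_j+a_j\ot a_j^\dagger)$. Each term changes the occupation of mode $j$ in one copy by $+1$ and in the other copy by $-1$. On $\ket x\ket x$ both copies have the same occupation $x_j$, so either the raising operator or the lowering operator annihilates the corresponding factor. Jordan--Wigner signs do not affect this vanishing. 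Hence every term is zero.

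\emph{(ii)$\Rightarrow$(i).} Here I will use the identity $\Lambda^2=\sum_{j,k}c_jc_k\ot c_jc_k = 2m\,I+\sum_{j\neq k}c_jc_k\ot c_jc_k$. Let $M_{jk}=\langle \psi|\tfrac{\mathrm i}{2}[c_j,c_k]|\psi\rangle$ be the real antisymmetric covariance matrix. For $j\neq k$ we have $\tfrac{\mathrm i}{2}[c_j,c_k]=\mathrm i c_jc_k$, so
\begin{align*}
0=\langle\psi^{\ot2}|\Lambda^2|\psi^{\ot2}\rangle=2m-\sum_{j\neq k}M_{jk}^2=2m-\norm{M}_F^2 .
\end{align*}
It is standard that $M$ is a real antisymmetric contraction, since $\mathrm i c_jc_k$ combinations in normal modes have eigenvalues $\pm1$. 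Bringing $M$ to block form by some $R\in\SO(2m)$, with blocks $\begin{psmallmatrix}0&\lambda_k\\-\lambda_k&0\end{psmallmatrix}$ and $|\lambda_k|\le1$, we get $\norm M_F^2=2\sum_k\lambda_k^2$. The equality $\norm M_F^2=2m$ then forces $|\lambda_k|=1$ for all $k$.

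Replace $\psi$ by $U_R^\dagger\psi$; this preserves both condition (ii) and Gaussianity. In the new frame, $\langle \mathrm i c_{2k-1}c_{2k}\rangle=\pm1$, that is, $\langle 2N_k-I\rangle=\pm1$. Since $N_k$ has eigenvalues $0,1$, $\psi$ must be an eigenvector of every $N_k$, hence a standard basis vector up to phase, and so Gaussian. Note that no fixed parity is assumed at any point in this argument.

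The main obstacle is the contraction and normal-form step. I need $M$ to have singular values at most $1$ even without assuming fixed parity. This follows because $v^T M w$ for unit vectors $v,w$ in a normal-mode pair is the expectation of $\mathrm i c'_1c'_2$, which is a Hermitian unitary, so its expectation lies in $[-1,1]$.
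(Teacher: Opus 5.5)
Your proposal is correct and follows essentially the same route as the paper. Both arguments use the identity $\norm{\Lambda\ket\psi^{\ot 2}}^2 = 2m-\norm{\Gamma_\psi}_2^2$, block-diagonalize the covariance matrix by a rotation, bound $\abs{\nu_k}\le 1$ because $\mathrm{i}c_{2k-1}c_{2k}$ is a Hermitian unitary, and conclude that $\psi$ is a joint eigenvector of the $N_k$, hence a rotated standard basis state. You go directly from (ii) rather than (iii) and spell out the $\Lambda^2$ expansion and the check $\Lambda\ket x^{\ot 2}=0$ in more detail, which the paper leaves as ``a simple calculation''.
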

\begin{proof}
The implication \ref{it:f-gaussian}~$\Rightarrow$~\ref{it:f-lambda} follows from \cref{eq:Lambda-Gaussian-invariance} and $\Lambda \ket{x}^{\ot 2} = 0$ for any standard basis state.
The implication \ref{it:f-lambda}~$\Rightarrow$~\ref{it:f-p} holds by definition, since~$P$ is the projection onto the kernel of~$\Lambda$.
Finally, for \ref{it:f-p}~$\Rightarrow$~\ref{it:f-gaussian}, we have
\begin{align}\label{eq:norm-Gamma}
    0 = \norm{\Lambda \ket{\psi}^{\ot 2}}^2 = 2m - \norm{\Gamma_\psi}_2^2
\end{align}
where the first equality holds by assumption and the second by a simple calculation; $\norm{\cdot}_2$ denotes the Hilbert--Schmidt norm and $(\Gamma_\psi)_{jk} \coloneqq \frac{\mathrm{i}}{2}\langle [c_j, c_k] \rangle_\psi$ is the so-called \emph{covariance (or correlation) matrix}.
The covariance matrix is real and antisymmetric, hence it can be block diagonalized by a rotation (which does not change the norm).
Hence, by acting with a Gaussian unitary, we may assume without loss of generality that $\Gamma_\psi$ is block diagonal, with blocks
    \begin{align*}
        \Gamma_\psi = \bigoplus_{k=1}^m \begin{pmatrix}
            0 & \nu_k \\ -\nu_k & 0
        \end{pmatrix} \, \qquad \nu_k \in \RR.
    \end{align*}
    Since $\nu_k = i\langle c_{2k-1} c_{2k} \rangle_\psi = -\langle Z_k \rangle_\psi$, we know that in fact $\abs{\nu_k} \leq 1$.
    Now \cref{eq:norm-Gamma} implies that $\nu_k = \pm 1$.
    Thus the state is a joint eigenstate of $Z_1, \dots, Z_m$, i.e., a standard basis state, hence a Gaussian state.
\end{proof}

\noindent
In fact, it is known that $\ker \Lambda = \Span \{ \ket{\psi}^{\ot 2} : \ket{\psi} \in \GG_m \}$; see \cite[Lem.~A.1]{melo2013power}.
In particular, $P$ is supported on the symmetric subspace, as asserted earlier (but we will not require this fact).

\smallskip

The operator $\Lambda$ has an interpretation in terms of Gaussian unitaries.
To see this, let us define Majorana operators on $2m$ modes, corresponding to an isomorphism $\HH_m^{\ot 2} \cong \HH_{2m} \cong \mathcal{F}_{\wedge}(\CC^2 \ot \CC^m)$:
\begin{align}\label{eq:two copy majoranas}
    c_{j,1} = c_j\ot\id, \qquad
    c_{j,2} = \Pi\ot c_j,
    \qquad (j=1,\dots,2m)
\end{align}
which are easily seen to satisfy the canonical anticommutation relations $\{c_{j,a},c_{k,b}\} = 2\delta_{jk}\delta_{ab}\id$.
Then:
\begin{align}\label{eq:two copy majorana rotation}
    L
\coloneqq \mathrm{i} \sum_{j=1}^{2m} c_{j,1} c_{j,2}
 = -\mathrm{i} (\Pi \ot I) \Lambda
\end{align}
is a generator of rotations between the two sets of $m$~modes; these are the Gaussian unitaries corresponding to the subgroup $\SO(2) \subseteq \SU(\CC^2 \ot \CC^m) = \SU(2m) \subseteq \SO(4m)$, given simply by the second quantized action~$\Gamma_{\wedge}(R \ot I_m)$ for $R \in \SO(2)$, as explained above.
In particular, $\ker L = \ker \Lambda$ because $\Pi$ is invertible, so we see that~$P$ identifies with the projection onto the rotation-invariant subspace of~$\HH_{2m}$.

\subsection{Testing fermionic Gaussianity with a constant number of copies}
The two-outcome measurement $\{P,\id-P\}$ accepts two copies of an unknown state with probability one if and only if the state is Gaussian.
We now show that this characterization is \emph{robust}: states that are accepted with high probability are close to Gaussian states.

\begin{theorem}\label{thm:main-fermions}
Let $\psi\in\P(\HH_m)$. If $p(\psi) \geq 1 - \eps$, then $F_{\GG}(\psi)^2 \geq 1 - 8\eps$.
\end{theorem}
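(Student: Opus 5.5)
We apply \cref{thm:bound-fidelity-gradient-flow} with $\HH=\HH_m$ and $P$ the projection onto $\ker\Lambda$. We already know that $P$ commutes with the swap operator and, by \cref{lem:exact fermions}, that $\GG=\GG_m$. The constant $8$ means we need $C_\alpha = 1/(1-\tfrac32\alpha)\le 8$, i.e.\ $\alpha\le 7/12$. This is exactly the bound in \cref{thm:kac-walk}. So the task reduces to showing that $S=\frac13(P_{12}+P_{13}+P_{23})$ on $\HH_m^{\ot 3}$ is of the form $T_\pi$ for some representation $\pi$ of $\SO(3)$. Then $\spec(S)\subseteq[0,\tfrac7{12}]\cup\{1\}$, and the theorem gives $F_\GG^2\ge 1-8\eps$.

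To build $\pi$, I extend the construction~\eqref{eq:two copy majoranas} to three copies. Define Majoranas $c_{j,1}=c_j\ot I\ot I$, $c_{j,2}=\Pi\ot c_j\ot I$, and $c_{j,3}=\Pi\ot\Pi\ot c_j$. These satisfy the canonical anticommutation relations, which identifies $\HH_m^{\ot3}\cong\HH_{3m}\cong\mathcal F_\wedge(\CC^3\ot\CC^m)$. The second-quantized action $\Gamma_\wedge(R\ot I_m)$ for $R\in\SO(3)\subseteq\SU(3)\subseteq\SO(6m)$ is then a genuine (not merely projective) continuous unitary representation. It is genuine because $\Gamma_\wedge$ of a unitary is honestly multiplicative on Fock space. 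Its generators are the Gaussian unitaries with quadratic Hamiltonians $L_{ab}=\mathrm{i}\sum_j c_{j,a}c_{j,b}$, and rotations in the $(a,b)$-plane are $e^{\theta L_{ab}/2}$ up to normalization. The invariant subspace of the rotation subgroup about the axis orthogonal to the $(a,b)$-plane is therefore $\ker L_{ab}$.

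The main step is to check that $\ker L_{ab}=\operatorname{ran}P_{ab}$, so that $\Pi_x,\Pi_y,\Pi_z$ coincide with $P_{23},P_{13},P_{12}$.
\begin{itemize}
\item For $(a,b)=(1,2)$: $L_{12}=-\mathrm{i}(\Pi\ot I)\Lambda\ot I$, so $\ker L_{12}=\ker(\Lambda\ot I)=\operatorname{ran}P_{12}$ by~\eqref{eq:two copy majorana rotation}.
\item For $(a,b)=(2,3)$: $L_{23}=\mathrm{i}\sum_j (\Pi\ot c_j\Pi\ot c_j)$ up to ordering, which equals $I\ot(\text{unit})\cdot\Lambda$ with an invertible factor $(I\ot\Pi\ot I)$. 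Hence its kernel is $\operatorname{ran}(I\ot P)$.
\item For $(a,b)=(1,3)$: $c_{j,1}c_{j,3}=c_j\Pi\ot\Pi\ot c_j$. The middle factor $\Pi$ on copy $2$ is a unitary that commutes with the rest. So $L_{13}$ is $\Lambda_{13}$ multiplied by the invertible operators $\Pi$ on copies $1$ and $2$, up to a phase. Because these factors act diagonally in parity and commute appropriately, the kernel is $\operatorname{ran}P_{13}$.
\end{itemize}
This sign and ordering bookkeeping, especially for the non-adjacent pair $(1,3)$ and for the requirement that the invertible prefactor commute with $\Lambda_{ab}$ so that kernels genuinely coincide, is what I expect to be the main obstacle. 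If the direct approach is messy, I would instead use the fact that the anticommutation of $\Pi$ with the $c_j$ is compensated within each product. Alternatively, I would use parity superselection: $\Lambda$ preserves total parity, so on each parity sector $\Pi$ acts as $\pm1$.

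Once these identifications hold, $S=T_\pi$ exactly. \cref{thm:kac-walk} then gives $\alpha=7/12<2/3$, so $C_\alpha=1/(1-\tfrac78)=8$, and the second statement of \cref{thm:bound-fidelity-gradient-flow} yields $F_\GG(\psi)^2\ge1-8\eps$.
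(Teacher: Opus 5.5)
Your proposal is correct and follows the paper's proof essentially verbatim. It uses the same three-copy Majoranas $c_{j,a}$, the same representation $\Gamma_\wedge(R\ot I_m)$ of $\SO(3)$, the identification $S=T_\pi$, and \cref{thm:kac-walk} with $C_{7/12}=8$. The bookkeeping you flag as the main obstacle is in fact immediate. From $c_j\Pi=-\Pi c_j$ one gets $L_{12}=-\mathrm{i}(\Pi\ot I\ot I)\Lambda_{12}$, $L_{23}=-\mathrm{i}(I\ot\Pi\ot I)\Lambda_{23}$ and $L_{13}=-\mathrm{i}(\Pi\ot\Pi\ot I)\Lambda_{13}$. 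Left multiplication by a unitary never changes a kernel, so no commutation of the prefactor with $\Lambda_{ab}$ is needed. Two small slips: in $c_{j,2}c_{j,3}$ the first tensor factor is $\Pi^2=I$, not $\Pi$; and since $L_{ab}$ is Hermitian, the rotations are $e^{-\mathrm{i}\theta L_{ab}/2}$, not $e^{\theta L_{ab}/2}$.
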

\begin{proof}
We will use the method in \cref{sec:gradient-flow,sec:kac} and bound the spectral gap of the operator~$S$ in \cref{eq:def-S} by identifying it with the operator~$T_\pi$ in \cref{eq:def-T} for a suitable representation of $\SO(3)$.
Recall that
\begin{align*}
    S = \frac13\left(P_{12} + P_{13} + P_{23} \right),
\end{align*}
where $P_{ab}$ denotes the operator that acts as $P$ on subsystems~$a$ and~$b$.
Analogously to \cref{eq:two copy majoranas}, we define Majorana operators on $3m$ modes, corresponding to an isomorphism $\HH_m^{\ot 3} \cong \HH_{3m} \cong \mathcal F_{\wedge}(\CC^3 \ot \CC^m)$:
\begin{align*}
    c_{j,1}=c_j\ot\id\ot\id, \qquad
    c_{j,2}=\Pi\ot c_j\ot\id, \qquad
    c_{j,3}=\Pi\ot\Pi\ot c_j,
    \qquad (j=1,\dots,2m)
\end{align*}
which satisfy the canonical anticommutation relations $\{c_{j,a},c_{k,b}\} =2\delta_{jk}\delta_{ab}\id$.
On $\HH_{3m}$, we have a representation $\pi$ of $\SO(3) \subseteq \SU(\CC^3 \ot \CC^m) = \SU(3m) \subseteq \SO(6m)$, given by the second quantized action of~$\Gamma_{\wedge}(R \ot I_m)$ for $R \in \SO(3)$.
Corresponding to the rotations about the three coordinate axes, similarly as in \cref{eq:two copy majorana rotation}, we have generators
\begin{align*}
    L_{ab} = \mathrm{i} \sum_{j=1}^{2m} c_{j,a} c_{j,b}.
\end{align*}
Then, $L_{12} = - \mathrm{i} (\Pi \ot I \ot I) \Lambda_{12}$, $L_{23} = - \mathrm{i} (I \ot \Pi \ot I) \Lambda_{23}$, and $L_{13} = - \mathrm{i} (\Pi \ot \Pi \ot I) \Lambda_{13}$, where $\Lambda_{ab}$ denotes the operator that acts as $\Lambda$ on subsystems~$a$ and~$b$.
Thus, the projections~$P_{ab}$ identify with the projections~$\Pi_x,\Pi_y,\Pi_z$ onto the subspaces invariant under the three coordinate axis rotations, as in \cref{eq:def-T}.
It follows that $S = T_\pi$ under our identification.
By \cref{thm:kac-walk}, we have $\spec(S) \subseteq [0,\frac{7}{12}] \cup \{1\}$, and now \cref{thm:bound-fidelity-gradient-flow} concludes the proof because $C_{7/12} = 1/(1-\frac78) = 8$.
\end{proof}

As a consequence, we find that the number of copies needed to test whether a state is Gaussian or constantly far from Gaussian is \emph{independent} of the number of modes~$m$.
Indeed, one simply repeats the two-outcome measurement a suitable number of times and accepts if all measurements accept.

\begin{corollary}\label{cor:fermion-testing}
For any $\eps,\delta\in(0,1)$, there exists a test that uses $\bigO(\frac1\eps \log \frac1\delta)$ copies of a pure state~$\psi$ to achieve the following:
if $\psi$ is a fermionic Gaussian state, the test always accepts, while if $F_{\GG}(\psi)^2 \leq 1-\eps$, it rejects with probability at least~$1-\delta$.
The test acts on two copies of the state at a time.
\end{corollary}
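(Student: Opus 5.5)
The test is the obvious repetition: take $k$ independent pairs of copies, i.e.\ $2k$ copies in total. On each pair, perform the two-outcome measurement $\{P,\id-P\}$, and accept if and only if every one of the $k$ measurements accepts. We then choose $k$ of order $\frac1\eps\log\frac1\delta$.

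\emph{Completeness.} If $\psi\in\GG_m$, then \cref{lem:exact fermions} gives $p(\psi)=1$. Each round accepts with certainty, so the whole test accepts with probability one.

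\emph{Soundness.} Suppose $F_{\GG}(\psi)^2\le 1-\eps$. We claim $p(\psi)<1-\eps/8$. Otherwise $p(\psi)\ge 1-\eps'$ with $\eps'=\eps/8$, and \cref{thm:main-fermions} would give $F_\GG(\psi)^2\ge 1-8\eps'=1-\eps$. To keep the inequalities strict, apply the theorem with $\eps'$ slightly below $\eps/8$, or simply use the bound $p(\psi)\le 1-\eps/16$ and absorb the factor of two into the constant.

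The $k$ rounds act on disjoint copies of the product state $\psi^{\ot 2k}$, so their outcomes are independent. The probability that all rounds accept is therefore
\[
p(\psi)^k\le (1-\eps/16)^k\le e^{-k\eps/16}.
\]
Choosing $k=\lceil \frac{16}{\eps}\ln\frac1\delta\rceil$ makes this at most $\delta$, using $2k=\bigO(\frac1\eps\log\frac1\delta)$ copies. Each measurement acts on two copies at a time.

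There is no real obstacle: all the work is in \cref{thm:main-fermions}. The only care needed is the contrapositive with strict versus non-strict inequalities, handled above by the slack in the constant, and the observation that the rounds are independent because the input is a product state.
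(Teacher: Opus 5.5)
Your proof is correct and is the paper's argument: repeat the two-copy measurement $\bigO(\frac1\eps\log\frac1\delta)$ times on fresh pairs, get completeness from \cref{lem:exact fermions}, and get soundness from \cref{thm:main-fermions} via $p(\psi)^n \le e^{-cn\eps}$. The detour to $\eps/16$ is harmless but unnecessary: if $p(\psi) > 1-\eps/8$, applying \cref{thm:main-fermions} with $\eps' = 1-p(\psi) < \eps/8$ gives $F_\GG(\psi)^2 > 1-\eps$, so $p(\psi)\le 1-\eps/8$ follows directly and the paper's constant $8$ suffices.
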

\begin{proof}
Repeat the two-outcome measurement $n = \ceil{\frac8\eps\log\frac1\delta}$ times, on fresh pairs of copies, and accept if every measurement accepts.
In total, this uses $2n$ copies of the unknown state~$\psi$.
If $\psi$ is fermionic Gaussian, then this test always accepts by \cref{lem:exact fermions}.
If $F^2_{\GG}(\psi) \leq 1 - \eps$, then $p(\psi) \leq 1 - \frac\eps8$ by \cref{thm:main-fermions}, and hence the test accepts with probability at most
$p(\psi)^n \leq \parens*{ 1 - \tfrac\eps8 }^n \leq e^{-n\tfrac\eps8} \leq \delta$.
\end{proof}

\section{Testing bosonic Gaussian states}\label{sec:bosons}
The Hilbert space of $m$ bosonic modes is the \emph{bosonic Fock space}
\begin{align}
    \HH_m = L^2(\RR^m) \cong \mathcal F_\vee \CC^m = \bigoplus_{k=0}^{\infty} \Sym^k(\CC^m)
\end{align}
The position and momentum operators $X_j$ and~$P_j$ ($j=1,\ldots,m$) are defined by~$(X_j f)(x) = x_j f(x)$ and~$(P_j f)(x) = -\mathrm{i} \frac{\partial f (x)}{\partial x_j}$, respectively.
They satisfy the \emph{canonical commutation relations} $[X_j,P_k] = \mathrm{i}\,\delta_{jk} \id$.
The bosonic creation and annihilation operators are defined by $a_j^\dagger \coloneqq (X_j - \mathrm{i} P_j)/\sqrt2$ and $a_j \coloneqq (X_j + \mathrm{i} P_j)/\sqrt2$, and they satisfy
\[ [a_j, a_k^\dagger] = \delta_{jk}I \quad\text{and}\quad [a_j, a_k] = 0 = [a_j^\dagger,a_k^\dagger]. \]
The number operator in the $j$-th mode is defined as $N_j = a_j^\dagger a_j$.
There is a standard basis of $\HH_m$ of joint eigenvectors of the number operators $N_j$, given by
\begin{align*}
    \ket{n} \coloneqq \frac{(a_1^\dagger)^{n_1} \dots (a_m^\dagger)^{n_m}}{\sqrt{n_1! \dots n_m!}} \ket{0} \qquad (n \in \NN^m),
\end{align*}
where $\ket{0}$ is the vacuum state.
Write $r=(x,p)\in \RR^{2m}$ and $R = (X_1,\ldots,X_m,P_1,\ldots,P_m)$.
For any state~$\rho \in \S(\HH_m)$, we call $\tr[\rho R]$ (when it exists) its \emph{mean}.
Let $\Omega \coloneqq \begin{psmallmatrix}0 & I\\-I & 0\end{psmallmatrix}$ denote the matrix defining the standard symplectic form on~$\RR^{2m}$, so $[r^TR, s^TR] = \mathrm{i} \, r^T\Omega s \id$.
The \emph{displacement (or Weyl) operators} are the unitary operators defined by $D_r \coloneqq \exp( -\mathrm{i}\, r^T\Omega R)$.
The \emph{characteristic function} of a state $\rho \in \S(\HH_m)$ is defined as follows:
\begin{align*}
    \chi_\rho \colon \RR^{2m}\to \CC, \qquad \chi_\rho(r)=\tr[\rho D_r].
\end{align*}
It satisfies $\chi_\rho(0)=1$, is continuous at $0$, is bounded by $\abs{\chi_\rho(r)} \leq 1$, satisfies $\chi_\rho(-r)=\overline{\chi_\rho(r)}$, and uniquely determines $\rho$.


For every symplectic matrix $S \in \Sp(2m)$, there is a unitary~$U_S$ such that
\begin{align*}
    U_S^\dagger R U_S = S R, \qquad U_S^\dagger D_r U_S = D_{S^{-1}r}.
\end{align*}
This gives a projective representation of the symplectic group $\Sp(2m)$ on $\HH_m$;
it lifts to a representation of the metaplectic group.
The unitaries $U_S$ are called \emph{ (symplectic) Gaussian unitaries}.
They are generated by Hamiltonians homogeneously quadratic in the position and momentum operators, $H = \sum_{j,k=1}^{2m} h_{jk} R_j R_k$, where $\overline{h_{jk}} = h_{kj} \in \CC$.

A pure state $\psi$ is called a \emph{bosonic Gaussian state} if it is the unique ground state of a (not necessarily homogeneous) quadratic Hamiltonian.
Equivalently, the wave function takes the form $\psi(x) = \exp(P(x))$ where $P$ is a complex polynomial of degree two, or $\ket{\psi} \propto D_r U_S \ket{0}$ where $\ket{0}$ is the vacuum state, for some $r \in \RR^{2m}$ and $S \in \Sp(2m)$.
A final equivalent characterization is that a pure state is Gaussian if and only if its characteristic function is Gaussian.
We denote the set of Gaussian states by~$\GG_m$.
We also consider the set of pure \emph{bosonic Gaussian states with zero mean}, denoted $\GG_m^0$; these are precisely the states of the form~$\ket\psi \propto U_S \ket{0}$ for $S \in \Sp(2m)$.
Finally, a \emph{coherent state} is a state of the form $D_r \ket{0}$ for some $r \in \RR^{2m}$; we denote by $\mathcal C_m$ the set of coherent states.

\subsection{Conditions for bosonic Gaussianity}
We consider a two-copy test based on a well-known condition for bosonic Gaussian states with zero mean, closely analogous to the fermionic case.
On $\HH_m^{\ot 2}$, define
\begin{equation*}
    \Lambda \coloneqq \mathrm{i} \sum_{j=1}^m \left( a_j \ot a_j^\dagger - a_j^\dagger \ot a_j \right).
\end{equation*}
Let $P_0$ be the orthogonal projection onto the kernel of $\Lambda$.
This defines the accepting outcome of a binary measurement with acceptance probability
\[ p_0(\psi) \coloneqq \ip{P_0}_{\psi^{\ot 2}}. \]
Note that $P$ commutes with the swap operator between the two copies.
It is not hard to see $\Lambda$ is invariant under conjugation by Gaussian unitaries
\begin{align}\label{eq:lambda-commutes-gaussian}
    U_S^{\ot 2}  \Lambda U_S^{\dagger, \ot 2} = \Lambda.
\end{align}
The operator $\Lambda$ is the generator of a phase-space rotation between the two copies, in the sense that $U_\theta \coloneqq \exp(-i \Lambda \theta)$ acts on a wavefunction $\Psi \in \HH_m^{\ot 2}$ as
\begin{align}\label{eq:U-theta}
    (U_\theta \Psi)(x,y) = \Psi(\cos(\theta)x + \sin(\theta)y, -\sin(\theta)x + \cos(\theta) y).
\end{align}
This is a special case of a more general symmetry: if we have $\HH_m^{\ot t}$, we can identify $\HH_m^{\ot t} \cong L^2(\RR^m \ot \RR^t)$, and there is an action of $\SO(t)$ by acting as $I \ot S$ for $S \in \SO(t)$ on the coordinates of the wave function.
The operator $\Lambda$ characterizes Gaussianity in similar fashion to the fermionic case~\cite{girardi2025gaussian}.

\begin{proposition}\label{prp:mean-zero-gaussian}
    Let $\ket{\psi} \in \HH_m$ be a pure state. The following are equivalent:
    \begin{enumerate}[(i),itemsep=0pt]
        \item\label{it:state-gaussian} $\psi$ is a pure Gaussian state with zero mean (i.e., $\psi \in \GG_m^0$),
        \item\label{it:lambda-annihilates} $\Lambda \ket{\psi}^{\ot 2} = 0$,
        \item\label{it:p-acc-one} $p_0(\psi) = 1$.
    \end{enumerate}
\end{proposition}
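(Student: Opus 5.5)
The plan follows the fermionic case (\cref{lem:exact fermions}) and proves the cycle (i)$\Rightarrow$(ii)$\Rightarrow$(iii)$\Rightarrow$(ii)$\Rightarrow$(i).

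\emph{(i)$\Rightarrow$(ii).} By \cref{eq:lambda-commutes-gaussian}, it suffices to check $\Lambda\ket0^{\ot2}=0$ for the vacuum. This holds because every term of $\Lambda$ contains an annihilation operator acting on one of the vacuum factors.

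\emph{(ii)$\Leftrightarrow$(iii).} These are equivalent by definition: $P_0$ is the projection onto $\ker\Lambda$, so $p_0(\psi)=\|P_0\psi^{\ot2}\|^2=1$ if and only if $\psi^{\ot2}\in\ker\Lambda$. Some care is needed with domains, since $\Lambda$ is an unbounded self-adjoint operator. Here it is understood as the generator of the unitary group $U_\theta$ in \cref{eq:U-theta}, and $\ker\Lambda$ is the space of $U_\theta$-invariant vectors.

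\emph{(ii)$\Rightarrow$(i).} This is the main step, and I would use the rotation-invariance picture rather than covariance matrices. Condition (ii) says that $U_\theta\psi^{\ot2}=\psi^{\ot2}$ for all $\theta$. Pass to characteristic functions. The unitary $U_\theta$ acts on two-copy phase space $\RR^{2m}\oplus\RR^{2m}$ by rotating the two copies, and it conjugates displacements as $D_{r}\ot D_{s}\mapsto D_{\cos\theta\, r+\sin\theta\, s}\ot D_{-\sin\theta\, r+\cos\theta\, s}$ (up to sign conventions). Since $\chi_{\psi\ot\psi}(r,s)=\chi(r)\chi(s)$ with $\chi=\chi_\psi$, invariance gives the functional equation
\[
\chi(r)\chi(s)=\chi\bigl(\tfrac{r+s}{\sqrt2}\bigr)\chi\bigl(\tfrac{s-r}{\sqrt2}\bigr)\quad\text{for all } r,s,
\]
and more generally for every rotation angle. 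Setting $s=r$ gives $\chi(r)^2=\chi(\sqrt2 r)\chi(0)=\chi(\sqrt2 r)$. Setting $s=-r$ gives $|\chi(r)|^2=\chi(0)\chi(-\sqrt2 r)$, so $\chi$ is real. Setting $s=0$ gives $\chi(r)=\chi(r/\sqrt2)\chi(-r/\sqrt2)=|\chi(r/\sqrt2)|^2\ge0$.

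Next I would show that $\chi$ never vanishes. If $\chi(r_0)=0$, then iterating $\chi(r)=\chi(r/\sqrt2)^2$ gives $\chi(2^{-k/2}r_0)=0$ for all $k$, which contradicts continuity at $0$ with $\chi(0)=1$. So $g=\log\chi$ is a continuous real function on $\RR^{2m}$. The full functional equation over all angles says that $g(r)+g(s)$ depends only on the $\SO(2)$-orbit of the pair $(r,s)$. Restricted to any line or plane, this is the classical Maxwell/Herschel characterization of Gaussians: a continuous solution must be a quadratic form, $g(r)=-\tfrac14 r^TMr$, with no linear term because $\chi$ is even. To see this, use $g(r)+g(s)=g(\tfrac{r+s}{\sqrt2})+g(\tfrac{r-s}{\sqrt2})$ together with evenness; this yields the parallelogram law, and continuity then gives a quadratic form.

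Hence $\chi_\psi$ is a centered Gaussian, so $\psi$ is a mean-zero Gaussian state. Since $\psi$ is pure by assumption, $\psi\in\GG_m^0$ by the characterization recalled above.

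I expect the main obstacle to be the rigorous translation of $\Lambda\psi^{\ot2}=0$ into invariance of the characteristic function, namely checking the action of $U_\theta$ on Weyl operators. After that, the functional-equation argument goes through, and only continuity at $0$ is needed.
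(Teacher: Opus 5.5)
Your proposal is correct and follows essentially the same route as the paper: (i)$\Rightarrow$(ii) via Gaussian invariance of $\Lambda$ and the vacuum, (ii)$\Leftrightarrow$(iii) by definition, and (ii)$\Rightarrow$(i) via the $\theta=\pi/4$ functional equation for $\chi$, strict positivity by iterating $\chi(r)=\abs{\chi(r/\sqrt2)}^2$, and a continuous biadditive form built from $\log\chi$. The paper's proof leaves one step implicit, and you spell it out: the substitution $s=r$ gives $g(\sqrt2 r)=2g(r)$, and hence the parallelogram law.
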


\begin{proof}
    We have \ref{it:lambda-annihilates} $\Leftrightarrow$ \ref{it:p-acc-one} by definition.
    For \ref{it:state-gaussian} $\Rightarrow$ \ref{it:lambda-annihilates}, by \cref{eq:lambda-commutes-gaussian} it suffices to show $\Lambda \ket{\psi}^{\ot 2} = 0$ for the vacuum state~$\ket{\psi} = \ket{0}$, which follows immediately from $a_j \ket{0} = 0$.
    Finally, we show that \ref{it:lambda-annihilates} $\Rightarrow$ {it:state-gaussian}.
    If $p_0(\psi)=1$, then, in particular, $U_\theta \ket{\psi}^{\ot 2} = \ket{\psi}^{\ot 2}$ for $\theta = \pi/4$. At the level of the characteristic function $\chi$ of $\psi$, this translates to
    \begin{align}\label{eq:chi-rotation}
        \chi(r)\chi(s) = \chi\mleft(\frac{r+s}{\sqrt2}\mright)\chi\mleft(\frac{r-s}{\sqrt2}\mright) \qquad (s,r \in \RR^{2m}).
    \end{align}
    If we take $r=0$, we get $\chi(s) = \abs{\chi(s/\sqrt2)}^2$, and hence iterating $\chi(r) = \abs{\chi(2^{-k/2} r)}^{2^k}$. In particular, by continuity and $\chi(0) = 1$, $\chi$ is strictly positive.
    If we let $f(r) = \log \chi(r)$, then \cref{eq:chi-rotation} gives $f(r) + f(s) = f\mleft(\frac{r+s}{\sqrt2}\mright) + f\mleft(\frac{r-s}{\sqrt2}\mright)$.
    If we let $B(r,s) = \frac14(f(r+s) - f(r-s))$, then $B$ is additive in each variable, and by continuity it is in fact bilinear.
    We conclude that $f(r) = B(r,r)$ is quadratic, hence~$\chi$ is a centered Gaussian function, and we conclude that $\ket{\psi}$ is a Gaussian state with zero mean.
\end{proof}

A useful fact (see also \cite[Lem.~30]{girardi2025gaussian}) is that if we apply the test to mixed states, high acceptance probability implies that the state is close to a pure state.

\begin{lemma}\label{lem:purity}
    Let $\rho \in \S(\HH_m)$.
    If $\tr[\rho^{\ot 2} P_0] \geq 1 - \eps$, then $\norm{\rho}_{\infty} \geq 1 - 2\eps$.
\end{lemma}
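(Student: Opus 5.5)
The plan is to upper bound $\tr[\rho^{\ot 2} P_0]$ by a quantity that involves only the purity $\tr\rho^2$, using just a finite subgroup of the rotation symmetry in \cref{eq:U-theta}. The bound on $\norm{\rho}_\infty$ then follows from $\norm{\rho}_\infty \geq \norm{\rho}_\infty \tr\rho \geq \tr\rho^2$.

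\textbf{Step 1 (reduce to a finite group).} The kernel of $\Lambda$ is contained in the subspace of vectors invariant under every $U_\theta$. In particular, it is contained in the subspace invariant under the cyclic group $\{U_0, U_{\pi/2}, U_\pi, U_{3\pi/2}\}$. Let $Q$ be the projection onto that subspace. Then $P_0 \leq Q$ and
\begin{align*}
    Q = \tfrac14\left( I + U_{\pi/2} + U_\pi + U_{3\pi/2} \right).
\end{align*}
Hence $\tr[\rho^{\ot2}P_0] \leq \tr[\rho^{\ot 2}Q]$.

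\textbf{Step 2 (identify the group elements).} By \cref{eq:U-theta}, $U_\pi$ acts as $\Psi(x,y)\mapsto\Psi(-x,-y)$, so $U_\pi = \mathrm{Par}\ot\mathrm{Par}$. Here $\mathrm{Par}$ denotes the parity operator $f(x)\mapsto f(-x)$ on $\HH_m$, which is a Hermitian unitary. Similarly, $U_{\pi/2}$ acts as $\Psi(x,y)\mapsto\Psi(y,-x)$. This is the swap $F$ composed with parity on one tensor factor, e.g.\ $U_{\pi/2} = (\mathrm{Par}\ot I)F$ up to the exact placement of the factor, and $U_{3\pi/2} = U_{\pi/2}^\dagger$. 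I will check the exact form directly on wavefunctions; only its structure (swap times a local parity) matters below.

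\textbf{Step 3 (evaluate the traces).} First, $\tr[\rho^{\ot 2}U_\pi] = (\tr[\rho\,\mathrm{Par}])^2 \leq 1$. Next, the swap trick $\tr[(A\ot B)F] = \tr[AB]$ gives $\tr[\rho^{\ot2}(\mathrm{Par}\ot I)F] = \tr[\rho\,\mathrm{Par}\,\rho]$. By Cauchy--Schwarz in Hilbert--Schmidt norm and unitarity of $\mathrm{Par}$,
\begin{align*}
    \abs{\tr[\rho\,\mathrm{Par}\,\rho]} \leq \norm{\rho}_2\,\norm{\mathrm{Par}\,\rho}_2 = \tr\rho^2 .
\end{align*}
The term for $U_{3\pi/2}$ is the complex conjugate and obeys the same bound. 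Therefore
\begin{align*}
    1-\eps \leq \tr[\rho^{\ot2}Q] \leq \tfrac14\left(2 + 2\tr\rho^2\right),
\end{align*}
which gives $\tr\rho^2 \geq 1-2\eps$.

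\textbf{Step 4 (conclude).} Since $\tr\rho = 1$, we have $\norm{\rho}_\infty \geq \tr\rho^2 \geq 1-2\eps$.

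The only step needing care is Step 2: getting the exact operator form of $U_{\pi/2}$ right in the infinite-dimensional setting, and justifying $P_0 \leq Q$. For the latter, $\Lambda$ generates $U_\theta = \exp(-\mathrm{i}\Lambda\theta)$, so vectors in $\ker\Lambda$ are fixed by every $U_\theta$. All trace manipulations are legitimate for trace-class $\rho$ and bounded operators.
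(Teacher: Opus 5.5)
Your proof is correct and is essentially the paper's argument: both use that $\ker\Lambda$ lies in the $+1$-eigenspace of $U_{\pi/2}$, that $U_{\pi/2}$ is the swap composed with a reflection on one copy so that $\tr[U_{\pi/2}\rho^{\ot 2}]=\tr[J\rho^2]$, Cauchy--Schwarz to bound this by $\tr\rho^2$, and finally $\tr\rho^2\le\norm{\rho}_\infty$. The only difference is bookkeeping: you upper-bound $\tr[\rho^{\ot 2}P_0]$ by the $\mathbb{Z}_4$-averaged projection $Q$ and bound the extra $U_\pi$ term by $1$, whereas the paper directly lower-bounds $\abs{\tr[U_{\pi/2}\rho^{\ot 2}]}$ by $\tr[P_0\rho^{\ot 2}]-\tr[(I-P_0)\rho^{\ot 2}]$; both yield the same constant $1-2\eps$.
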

\begin{proof}
    We have
    \begin{align*}
        \abs{\tr[U_{\pi/2} \rho^{\ot 2}]} \geq \tr[P_0\rho^{\ot 2}] - \tr[(1-P_0)\rho^{\ot 2}] \geq 1 - 2\eps,
    \end{align*}
    since the image of $P_0$ is in the $+1$-eigenspace of $U_{\pi/2}$.
    The rotation $U_{\pi/2}$ is equivalent to a swap of the copies and a reflection on the second copy (defined by $J f(x) = f(-x)$ for $f \in \HH_m = L^2(\RR^m)$), so
    \begin{align*}
        \tr[U_{\pi/2} \rho^{\ot 2}] = \tr[J\rho^2].
    \end{align*}
    By Cauchy-Schwarz, using the unitarity of~$J$,
    \begin{align*}
        \abs{\tr[J\rho^2]} \leq \tr[\rho^2] \leq \norm{\rho}_{\infty}.
    \end{align*}
    The lemma follows from combining these three equations.
\end{proof}

Next, we consider a condition for Gaussianity with arbitrary mean.
Let~$U_{\pi/4}$ be as in \cref{eq:U-theta}.
It is such that
\begin{align}\label{eq:displacement-rotation}
    U_{\pi/4} (D_r \ot D_s) U_{\pi/4}^\dagger = D_{(r - s)/\sqrt2} \ot D_{(r+s)/\sqrt2}.
\end{align}
We now consider a four-copy test, where the accepting projection operator is
\begin{align*}
    P = U_{\pi/4}^{\ot 2} P_{0,24} U_{\pi/4}^{\dagger, \ot 2}
\end{align*}
where $P_{0,24}$ is the two-copy operator $P_0$ acting on the second and fourth copy of $\HH_m$.
Denote by
\[ p(\psi) \coloneqq \ip{P}_{\psi^{\ot 4}}. \]
the acceptance probability of the test.
We also consider a closely related test for coherent states.
In this case, let
\begin{align*}
    Q = U_{\pi/4} (I \ot \proj{0}) U_{\pi/4}^\dagger
\end{align*}
be the accepting projection operator for a two-copy test.
Note that $Q$ commutes with the swap operator between the two copies.
Denote the acceptance probability of the test by
\[ q(\psi) \coloneqq \ip{Q}_{\psi^{\ot 2}}. \]

\begin{proposition}
    Let $\ket{\psi} \in \HH_m$ be a pure state. The following are equivalent:
    \begin{enumerate}[(i),itemsep=0pt]
        \item\label{it:state-gaussian-nonzero-mean} $\ket{\psi}$ is a Gaussian state,
        \item\label{it:p-acc-one-nonzero-mean} $p(\psi) = 1$.
    \end{enumerate}
    Similarly, the following are equivalent:
        \begin{enumerate}[(i),itemsep=0pt]
        \item\label{it:state-coherent} $\ket{\psi}$ is a coherent state,
        \item\label{it:p-acc-one-coherent} $q(\psi) = 1$.
    \end{enumerate}
\end{proposition}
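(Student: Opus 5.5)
The plan is to work at the level of characteristic functions and reduce both equivalences to \cref{prp:mean-zero-gaussian} (for the Gaussian case) or to the vacuum (for the coherent case), using \cref{eq:displacement-rotation}.

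\emph{Gaussian case.} Write $\ket{\psi}^{\ot 4} = (\ket\psi^{\ot2})\ot(\ket\psi^{\ot 2})$ and set $\ket\Phi \coloneqq U_{\pi/4}^{\dagger}\ket\psi^{\ot 2}$. Then
\[
p(\psi) = \tr\big[P_{0}\,(\rho\ot\rho)\big],
\qquad
\rho \coloneqq \tr_{1}\proj\Phi ,
\]
where $\rho$ is the reduced state on the second output of the beam splitter. We compute $\chi_\rho$ from \cref{eq:displacement-rotation}. Testing $I\ot D_s$ against $\Phi$ amounts to testing $U_{\pi/4}(I\ot D_s)U_{\pi/4}^\dagger$ against $\psi^{\ot2}$, and this operator is, up to sign conventions in the rotation, $D_{-s/\sqrt2}\ot D_{s/\sqrt2}$. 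Hence
\[
\chi_\rho(s) = \chi_\psi(-s/\sqrt2)\,\chi_\psi(s/\sqrt2) = \abs{\chi_\psi(s/\sqrt2)}^2,
\]
which is the characteristic function of $\psi$ convolved with its reflection. In particular $\rho$ has zero mean. Moreover, $\chi_\rho$ does not change if $\psi$ is displaced, since $\chi_{D_r\psi D_r^\dagger}(s)=e^{i\,\cdot}\chi_\psi(s)$ differs only by a phase.

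For (i)$\Rightarrow$(ii), let $\psi = D_r U_S\ket0$. Then $\rho$ coincides with the state obtained from $U_S\ket0$, which by the same computation is pure with $\chi_\rho(s) = \chi_{U_S0}(s/\sqrt2)^2 = \chi_{U_S 0}(s)$ (a centered Gaussian squared and rescaled gives itself). So $\rho = U_S\proj0U_S^\dagger\in\GG_m^0$, and $p=1$ by \cref{prp:mean-zero-gaussian}.

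For (ii)$\Rightarrow$(i), the condition $\tr[P_0\rho^{\ot2}]=1$ together with \cref{lem:purity} (with $\eps=0$) gives $\norm{\rho}_\infty = 1$, so $\rho$ is pure. Then \cref{prp:mean-zero-gaussian} shows that $\rho$ is a zero-mean Gaussian state, i.e.\ $\abs{\chi_\psi(s)}^2 = e^{-s^T\Sigma s/2}$ is a nondegenerate-type Gaussian. The remaining step is to recover that $\psi$ itself is Gaussian. Because $\rho$ is pure and is a marginal of the pure state $\Phi$, the state $\Phi$ is a product $\ket{\phi_1}\ot\ket{\phi_2}$. So $U_{\pi/4}(\phi_1\ot\phi_2)=\psi\ot\psi$: a beam splitter maps a product to a product.

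This is the step I expect to be the main obstacle. The plan is to prove it with a characteristic-function version of the Darmois--Skitovich/Cramér-type argument. Writing product factorization in terms of $\chi$ gives
\[
\chi_{\phi_1}(u)\chi_{\phi_2}(v) = \chi_\psi\!\big(\tfrac{u+v}{\sqrt2}\big)\,\chi_\psi\!\big(\tfrac{v-u}{\sqrt2}\big).
\]
Setting $u=v$ and $u=-v$ expresses $\chi_{\phi_i}$ through $\chi_\psi$. Combining this with the knowledge that $\abs{\chi_\psi}^2$ is Gaussian, hence nowhere zero, we may take logarithms (continuous branch) and obtain a functional equation of exactly the type solved in \cref{prp:mean-zero-gaussian}, now allowing a linear term. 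Its solution is $\log\chi_\psi$ = quadratic + imaginary linear, so $\chi_\psi$ is Gaussian with mean, and $\psi\in\GG_m$.

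\emph{Coherent case.} Here $q(\psi) = \norm{(I\ot\bra0)\ket\Phi}^2$, so $q=1$ iff $\Phi = \ket{\phi}\ot\ket0$. That forces $\rho=\proj0$, i.e.\ $\abs{\chi_\psi(s/\sqrt2)}^2 = \chi_0(s) = e^{-\abs{s}^2/4}$. By the same product argument with $\phi_2$ equal to the vacuum, $\chi_\psi$ is Gaussian with covariance of the vacuum, hence $\psi = D_r\ket0$. The converse direction follows by direct computation: $U_{\pi/4}^\dagger(D_r\ot D_r)\ket{00} = D_{\sqrt2 r}\ket0\ot\ket0$ by \cref{eq:displacement-rotation}.
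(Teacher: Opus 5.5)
Your argument is correct in substance, but for the hard direction (ii)$\Rightarrow$(i) you take a different route from the paper. Like you, the paper first shows that $\rho$ is a pure mean-zero Gaussian state. It then uses that $U_{\pi/4}$ commutes with $U_S^{\ot 2}$ to assume without loss of generality that $\rho$ is the vacuum, which reduces the Gaussian case to the coherent case. There the factorization reads $\chi_\phi(r)\eta(s)=\chi_\psi(\tfrac{r+s}{\sqrt2})\chi_\psi(\tfrac{r-s}{\sqrt2})$, with $\eta$ the vacuum characteristic function. Since $\eta$ satisfies the same identity, dividing by it and setting $s=r$ gives Cauchy's multiplicative equation $\tilde\chi_\psi(x+y)=\tilde\chi_\psi(x)\tilde\chi_\psi(y)$ for $\tilde\chi_\psi=\chi_\psi/\eta$. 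Hence $\tilde\chi_\psi$ is a continuous character, with no logarithms or non-vanishing needed.

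You instead solve the general equation directly, without using covariance. To carry this out, substitute $v=0$ and $u=0$; your choices $u=\pm v$ only give products $\chi_{\phi_1}(u)\chi_{\phi_2}(\pm u)$. These substitutions yield $\chi_\psi(a+b)\chi_\psi(a-b)=\chi_\psi(a)^2\abs{\chi_\psi(b)}^2$. Take a continuous logarithm $g$ with $g(0)=0$; it exists because $\abs{\chi_\psi}^2$ is a nondegenerate Gaussian, hence positive, on the simply connected $\RR^{2m}$. The equation becomes $g(a+b)+g(a-b)=2g(a)+g(b)+g(-b)$.

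This is not literally the equation of \cref{prp:mean-zero-gaussian}, which forces $f$ to be even. You must split $g$ into even and odd parts:
\begin{itemize}
\item the even part satisfies the parallelogram law, so it is quadratic;
\item the odd part satisfies $g_{\mathrm o}(a+b)+g_{\mathrm o}(a-b)=2g_{\mathrm o}(a)$, hence is additive and, by continuity, linear.
\end{itemize}
Then $\chi_\psi(-r)=\overline{\chi_\psi(r)}$ shows the quadratic part is real and the linear part imaginary.

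Your route handles both equivalences uniformly: the coherent case follows by matching the quadratic part to the vacuum, and the symplectic covariance is never used. The paper's route is shorter, because normalizing $\rho$ to the vacuum collapses the functional equation to a homomorphism. Your (i)$\Rightarrow$(ii) via characteristic functions and uniqueness is equivalent to the paper's operator identity $U_{\pi/4}^\dagger D_r^{\ot 2}=(D_{\sqrt2 r}\ot I)U_{\pi/4}^\dagger$ combined with rotation invariance of the mean-zero state.

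One small sign slip: by \cref{eq:displacement-rotation}, the factorization is $\chi_{\phi_1}(u)\chi_{\phi_2}(v)=\chi_\psi(\tfrac{u+v}{\sqrt2})\chi_\psi(\tfrac{u-v}{\sqrt2})$. Your $(v-u)$ contradicts your own formula $\chi_\rho(s)=\abs{\chi_\psi(s/\sqrt2)}^2$ for $\rho=\phi_2$. It only amounts to swapping the output ports, so it does not affect the argument.
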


\begin{proof}
    If $\ket{\psi_r}$ is an arbitrary state, with mean $r \in \RR^{2m}$, then $\ket{\psi_0} = D_{-r} \ket{\psi_r}$ has mean zero.
    In particular, if $\ket{\psi_r}$ is Gaussian, then
    \begin{align*}
        U_{\pi/4}^\dagger \ket{\psi_r}^{\ot 2} = U_{\pi/4}^\dagger D_r^{\ot 2} \ket{\psi_0}^{\ot 2} = (D_{\sqrt2 r} \ot I) U_{\pi/4}^\dagger \ket{\psi_0}^{\ot 2} = (D_{\sqrt2 r} \ket{\psi_0}) \ot \ket{\psi_0}
    \end{align*}
    using \cref{eq:displacement-rotation,prp:mean-zero-gaussian}.
    Thus, after applying $(U^\dagger_{\pi/4})^{\ot 2}$ to $\ket{\psi_r}^{\ot 4}$, we have two copies of $\ket{\psi_0}$ on the second and fourth copy, and $P_0$ will accept with probability one.
    Similarly, if $\ket{\psi_r} = D_r \ket{0}$ is a coherent state, after applying $U_{\pi/4}^\dagger$ to two copies of $\ket{\psi_r}$, we are left with the vacuum state on the second copy, and thus $Q$ will accept~$\ket{\psi_r}^{\ot 2}$ with probability one.
    This shows that~(i) $\Rightarrow$~(ii) for both cases.

    For the converse, suppose that $p(\psi) = 1$. Let $\rho$ denote the reduced state on the second copy of~$U_{\pi/4}^\dagger \ket{\psi}^{\ot 2}$. The condition that $p(\psi) = 1$, is equivalent to $\tr[P_0 \rho^{\ot 2}] = 1$. This implies that $\rho$ is a pure (purity follows from \cref{lem:purity}) Gaussian state with mean zero.
    By using that $U_{\pi/4}$ commutes with $U_S^{\ot 2}$ for any symplectic Gaussian unitary~$U_S$, we may assume without loss of generality that this Gaussian state is the vacuum state.
    This reduces to the coherent state test: it suffices to show that if $q(\psi) = 1$, then $\ket{\psi}$ is a coherent state.
    In other words, if $U_{\pi/4}^\dagger \ket{\psi}^{\ot 2} = \ket{\phi} \ket{0}$ for some arbitrary state $\ket{\phi}$, then $\ket{\psi}$ must be coherent.
    To that end, consider the characteristic functions $\chi_\psi$ and $\chi_\phi$, and let $\eta$ denote the characteristic function of the vacuum. Then, we get
        \begin{align*}
        \chi_\phi(r)\eta(s) = \chi_\psi\mleft(\frac{r+s}{\sqrt2}\mright)\chi_\psi\mleft(\frac{r-s}{\sqrt2}\mright) \qquad s,r \in \RR^{2m}.
    \end{align*}
    Since $\eta(r)\eta(s) = \eta\mleft(\frac{r+s}{\sqrt2}\mright)\eta\mleft(\frac{r-s}{\sqrt2}\mright)$, if we let $\tilde \chi_\psi(r) = \chi_\psi(r)/\eta(r)$, $\tilde \chi_\phi(r) = \chi_\phi(r)/\eta(r)$, we get
    \begin{align}\label{eq:tilde-chi-rotation}
        \tilde\chi_\phi(r) = \tilde\chi_\psi\mleft(\frac{r+s}{\sqrt2}\mright)\tilde \chi_\psi\mleft(\frac{r-s}{\sqrt2}\mright) \qquad s,r \in \RR^{2m}.
    \end{align}
    Setting $s=r$, $\tilde\chi_\phi(r) = \tilde \chi_\psi(\sqrt2 r)$, and we can rewrite \cref{eq:tilde-chi-rotation} as $\tilde \chi_\psi(r + s) = \tilde \chi_\psi(r) \tilde \chi_\psi(s)$. This implies $\abs{\tilde \chi_\psi(r)}^2 = \tilde \chi_\psi(r)\tilde \chi_\psi(-r) = \tilde \chi_\psi(0) = 1$. Thus, $\tilde \chi_\psi$ is a continuous homomorphism $\RR^{2m} \to U(1)$, and must be of the form $\tilde \chi_\psi(r) = e^{-if(r)}$ for linear $f$. Finally, $\chi_\psi(r)$ is the product of $\eta(r)$ with $e^{-if(r)}$ which is the characteristic function of a coherent state.
\end{proof}

\subsection{Testing bosonic Gaussianity with a constant number of copies}

We will now show that the above tests are robust. We start with the test for zero-mean Gaussian states.

\begin{theorem}\label{thm:main boson mean zero}
    Let $\psi \in \P(\HH_m)$.
    If $p_0(\psi) = 1 - \eps$, then $F^2_{\GG_m^0}(\psi) \geq 1 - 8\eps$.
\end{theorem}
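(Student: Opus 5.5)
We follow the proof of \cref{thm:main-fermions} verbatim, now for the bosonic projection $P_0$ onto $\ker\Lambda$. The plan is to identify the three-copy operator $S = \frac13(P_{0,12}+P_{0,13}+P_{0,23})$ with $T_\pi$ from \cref{eq:def-T} for a suitable (strongly continuous) unitary representation $\pi$ of $\SO(3)$. Then \cref{thm:kac-walk} gives $\spec(S)\subseteq[0,\frac7{12}]\cup\{1\}$, and \cref{thm:bound-fidelity-gradient-flow} with $C_{7/12}=8$ yields $F^2_{\GG_m^0}(\psi)\geq 1-8\eps$. Here $\GG = \GG_m^0$ by \cref{prp:mean-zero-gaussian}, and $P_0$ commutes with the swap, as noted earlier.

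The representation is the one already indicated after \cref{eq:U-theta}. We identify $\HH_m^{\ot 3}\cong L^2(\RR^m\ot\RR^3)$ and let $R\in\SO(3)$ act by $(\pi(R)\Psi)(x_1,x_2,x_3)=\Psi\bigl((I\ot R^{-1})(x_1,x_2,x_3)\bigr)$, i.e.\ by linearly recombining the three copies' coordinate vectors. This is a unitary representation, since $I\ot R$ is orthogonal and hence preserves Lebesgue measure. It is strongly continuous, because translation-type actions of a Lie group by measure-preserving linear maps are continuous on $L^2$; one checks this on compactly supported continuous functions and extends by density.

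Next we identify the generators. The rotation in the $(a,b)$-plane is $\exp(-\mathrm{i}\theta\Lambda_{ab})$, where $\Lambda_{ab}$ is $\Lambda$ on copies $a,b$. This is \cref{eq:U-theta} applied to copies $a,b$, up to a sign convention in $\theta$ that does not affect invariant subspaces. The subspace of $\HH_m^{\ot3}$ invariant under this one-parameter group is exactly $\ker\Lambda_{ab}$, by Stone's theorem: a vector is fixed by $e^{-\mathrm{i}\theta\Lambda_{ab}}$ for all $\theta$ if and only if it lies in the kernel of the self-adjoint generator. Hence $P_{0,ab}$ equals the projection onto vectors invariant under rotations about the coordinate axis orthogonal to the $(a,b)$-plane, and $S=T_\pi$.

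The main obstacle is functional-analytic. $\Lambda$ is unbounded, so we must make sure it is (essentially) self-adjoint and coincides with the generator of the rotation group $U_\theta$, so that $\ker\Lambda$ as used in the definition of $P_0$ is the fixed space. I would handle this by noting that $\Lambda$ commutes with the total number operator of the two copies and preserves each finite-dimensional eigenspace of it. On each such eigenspace, $\Lambda$ is a Hermitian matrix, and $U_\theta$ acts as the second-quantized rotation $\Gamma_\vee(R_\theta\ot I_m)$, with $\Lambda$ as its generator. Taking the direct sum over these eigenspaces gives the self-adjoint closure, and the kernel and fixed-space identifications hold sector by sector. The same sector-wise argument shows that \cref{thm:kac-walk} applies: $\pi$ decomposes into finite-dimensional $\SO(3)$-representations on the fixed total-number subspaces of $\HH_m^{\ot3}$, and $T_\pi$ is block diagonal accordingly.
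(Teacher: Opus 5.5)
Your proposal is correct and is essentially the paper's proof. You identify $S_0=\frac13(P_{0,12}+P_{0,13}+P_{0,23})$ with $T_\pi$ for the $\SO(3)$ action on $L^2(\RR^m\ot\RR^3)$, then invoke \cref{thm:kac-walk} and \cref{thm:bound-fidelity-gradient-flow} with $C_{7/12}=8$. Your decomposition by total particle number is a sound way to fill in what the paper calls ``clear'': it makes $\Lambda$ self-adjoint, identifies $\ker\Lambda_{ab}$ with the fixed space of the rotations, and reduces \cref{thm:kac-walk} to finite-dimensional representations.
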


The argument is nearly identical to the fermionic case, except that we need not worry about the difference between ordinary and fermionic tensor products.
Again, \cref{thm:main boson mean zero} is an easy consequence of \cref{thm:bound-fidelity-gradient-flow,thm:kac-walk}.

\begin{proof}
    There is an action of $\SO(3)$ on $\HH_m^{\ot 3} \cong L^2(\RR^m \ot \RR^3)$.
    In particular, we have the rotations $U_{ab, \theta}$ acting on copies $a$ and $b$ for $1 \leq a < b \leq 3$,
    with generators~$\Lambda_{ab}$ and projection~$P_{0,ab}$ onto their kernels.
    We let
    \begin{align*}
        S_0 \coloneqq \frac{1}{3}\left(P_{0,12} + P_{0,13} + P_{0,23}\right).
    \end{align*}
    It is now clear that the $U_{ab,\theta}$ define a representation~$\pi$ of $\SO(3)$, and we have $S_0 = T_\pi$, and \cref{thm:kac-walk} implies that $\spec(S_0) \subseteq [0,\frac{7}{12}] \cup \{1\}$. The result follows from \cref{thm:bound-fidelity-gradient-flow}.
\end{proof}

Next, we want to prove robustness of the test defined by $Q$ for coherent states.

\begin{theorem}\label{thm:main boson coherent}
    Let $\psi \in \P(\HH_m)$ be a pure state. If $q(\psi) = 1 - \eps$, then $F^2_{\mathcal C_m}(\psi) \geq 1 - 4\eps$.
\end{theorem}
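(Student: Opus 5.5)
The plan is to apply \cref{thm:bound-fidelity-gradient-flow} to the two-copy projection $Q$, which commutes with the swap operator. By the preceding proposition, the good set $\{\psi : q(\psi)=1\}$ is exactly $\mathcal C_m$. The constant $4$ equals $C_\alpha = 1/(1-\frac32\alpha)$ for $\alpha = \frac12$. So it suffices to show that
\[ S_Q \coloneqq \tfrac13\left(Q_{12}+Q_{13}+Q_{23}\right) \quad\text{satisfies}\quad \spec(S_Q)\subseteq[0,\tfrac12]\cup\{1\}. \]
Unlike the mean-zero case, I do not plan to go through \cref{thm:kac-walk}. I will instead compute the spectrum directly in a second-quantized picture.

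First I identify $\HH_m^{\ot 3}\cong L^2(\RR^m\ot\RR^3)\cong \mathcal F_\vee(\CC^m\ot\CC^3)$, with $a_{j,b}$ the annihilation operator of mode $j$ in copy $b$. For a unit vector $w\in\RR^3$, set $a_j(w)\coloneqq\sum_b w_b a_{j,b}$. By \cref{eq:U-theta}, $U_{\pi/4}$ on copies $a,b$ is the passive rotation mixing $e_a$ and $e_b$. Hence $Q_{ab}$ is the projection onto the joint kernel of the operators $a_j(w_{ab})$, $j=1,\dots,m$, where $w_{ab}=(e_a-e_b)/\sqrt2$ (up to sign). Equivalently, $Q_{ab}$ is the projection onto $\ker N(w_{ab})$, where $N(w)\coloneqq\sum_j a_j(w)^\dagger a_j(w)$ counts particles in the modes $w\ot\CC^m$.

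The three vectors $w_{12},w_{13},w_{23}$ lie in the two-dimensional plane $V=(1,1,1)^\perp$ and form a tight frame: $\sum_{a<b} w_{ab}w_{ab}^T=\frac32\Pi_V$. Since $w\mapsto N(w)$ is the second quantization of the quadratic form $ww^T\ot I_m$, this gives
\[ N(w_{12})+N(w_{13})+N(w_{23}) = \tfrac32 N_V, \]
where $N_V$ is the number operator for the modes $V\ot\CC^m$. Every $Q_{ab}$ commutes with $N_V$, so $S_Q$ is block diagonal with respect to the eigenspaces $\{N_V=k\}$. On the block $k=0$, all three $Q_{ab}$ act as the identity, so $S_Q=I$ there. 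On a block $k\ge1$, each $N(w_{ab})$ takes values in $\{0,\dots,k\}$, so $I-Q_{ab}\ge N(w_{ab})/k$. Averaging gives
\[ S_Q\le I-\tfrac1{3k}\cdot\tfrac32 N_V = \tfrac12 I \qquad\text{on } \{N_V=k\}. \]
This yields $\spec(S_Q)\subseteq[0,\frac12]\cup\{1\}$, and \cref{thm:bound-fidelity-gradient-flow} then gives $F_{\mathcal C_m}^2(\psi)\ge 1-4\eps$.

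The step needing most care is the identification of $Q_{ab}$ with $\ker N(w_{ab})$. This requires checking the sign convention of $U_{\pi/4}$ in \cref{eq:U-theta} and making sure that conjugating the second-copy vacuum projector produces exactly the vacuum of the difference mode; signs are irrelevant to $N(w)$. One must also verify that $I-Q_{ab}\ge N(w_{ab})/k$ holds within each block, which follows because $N(w_{ab})$ commutes with both $N_V$ and $Q_{ab}$.
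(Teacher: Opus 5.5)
Your proof is correct. Its outer structure matches the paper's: apply \cref{thm:bound-fidelity-gradient-flow} with $\alpha=\frac12$ (so $C_\alpha=4$), pass to $\mathcal F_\vee(\CC^m\ot\CC^3)$, identify $Q_{ab}$ with the vacuum projection of the difference modes, and block-diagonalize by the occupation $k$ of the modes $\CC^m\ot V$, where $V=(1,1,1)^\perp$.

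The difference lies in how each block is handled.

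\emph{The paper's route.} It describes $Q_{ab}^{(k)}$ as the projection onto $\Sym^k(\CC^m\ot\CC v_{ab})$, where $v_{ab}\in V$ is orthogonal to your $w_{ab}$. It then reads off the nonzero eigenvalues of $3S^{(k)}$ from the Gram matrix of the vectors $v_{ab}^{\ot k}$, whose off-diagonal entries are $s_k=(-\frac12)^k$. This gives the exact block spectra $\frac{1+2s_k}{3}$ and $\frac{1-s_k}{3}$. Implicitly, it uses that $x^{\ot k}\mapsto(x\ot v)^{\ot k}$ embeds $\Sym^k(\CC^m)$ isometrically, with cross-overlaps $\langle v,v'\rangle^k$.

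\emph{Your route.} You use only two facts: the tight-frame identity $\sum_{a<b}w_{ab}w_{ab}^T=\frac32\Pi_V$, and the integrality of $N(w_{ab})\le N_V$. Together these give the operator bound $S_Q\le\frac12 I$ on every block with $k\ge1$ in one line, with no diagonalization and no need to describe how $\Sym^k(\CC^m\ot\CC v)$ sits inside $\Sym^k(\CC^m\ot V)$.

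\emph{What each buys.} Your bound loses nothing. On the block $k=1$ one has $I-Q_{ab}=N(w_{ab})$ exactly, so $S_Q=\frac12 I$ there and $\alpha=\frac12$ is sharp, consistent with the paper's finding that the maximum $\frac12$ is attained. Your argument also extends verbatim to any family of single-mode vacuum projections whose defining vectors form a tight frame. The paper's computation, in exchange, gives the full spectrum for every $k$.
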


\noindent
To prove this, let $S = \frac13\left(Q_{12} + Q_{13} + Q_{23}\right)$. \cref{thm:main boson coherent} is a consequence of \cref{thm:bound-fidelity-gradient-flow} and the following \cref{prp:S-spectrum-coherent-states}.

\begin{proposition}\label{prp:S-spectrum-coherent-states}
    For any number of modes~$m$, $\spec(S)\subseteq[0,1/2]\cup\{1\}$.
\end{proposition}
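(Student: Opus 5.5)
The plan is to diagonalize $S$ exactly, using the passive linear-optical structure of the three copies rather than \cref{thm:kac-walk}.

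\textbf{Step 1: change of modes.} Write $a_{j,a}$ for the $j$-th mode of copy $a\in\{1,2,3\}$, and view the mode space as $\CC^3\ot\CC^m$. The unitary $U_{\pi/4}$ on copies $a,b$ is a beam splitter. So $Q_{ab}$ is the projection onto states annihilated by all $b_{ab,j}\coloneqq (a_{j,a}-a_{j,b})/\sqrt2$, $j=1,\dots,m$; the sign convention does not matter for the argument. In other words, $Q_{ab}$ projects onto the vacuum of the modes $u_{ab}\ot\CC^m$, where $u_{ab}=(e_a-e_b)/\sqrt2$.

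All three vectors $u_{ab}$ lie in the plane $W=(1,1,1)^\perp\subseteq\CC^3$. Splitting $\CC^3=\CC(1,1,1)\op W$ factorizes the Fock space as
\[
\mathcal F_\vee(\CC(1,1,1)\ot\CC^m)\ot\mathcal F_\vee(W\ot\CC^m).
\]
Each $Q_{ab}$ acts as the identity on the first factor. Hence it suffices to bound the spectrum of $S$ on $\mathcal F_\vee(W\ot\CC^m)$.

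\textbf{Step 2: reduce to particle-number sectors.} Each $Q_{ab}$ commutes with the total particle number $N_W$ in $W\ot\CC^m$, so $S$ preserves the sectors $\Sym^n(W\ot\CC^m)$.
\begin{itemize}
\item For $n=0$ every $Q_{ab}$ acts as $1$, which gives the eigenvalue $1$.
\item For $n\ge1$, $Q_{ab}$ is the projection onto $\Sym^n(v_{ab}\ot\CC^m)$, where $v_{ab}\in W$ is the unit vector orthogonal to $u_{ab}$. These are exactly the states with no particle in $u_{ab}\ot\CC^m$.
\end{itemize}

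\textbf{Step 3: apply Howe duality.} Decompose
\[
\Sym^n(W\ot\CC^m)\cong\bigoplus_\lambda V_\lambda^{\U(2)}\ot V_\lambda^{\U(m)},
\]
where $\lambda=(\lambda_1,\lambda_2)$ runs over partitions of $n$ with at most two rows. Since $Q_{ab}$ commutes with $\U(m)$, it acts only on the $\U(2)$ factor. There it projects onto the weight vectors with zero occupation of $u_{ab}$, that is, with weight $(n,0)$ in the basis $(v_{ab},u_{ab})$. Such vectors exist only for $\lambda=(n,0)$, where the projection is rank one onto $v_{ab}^{\ot n}\in\Sym^n(\CC^2)$. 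On every other $\lambda$, $S=0$.

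So on the relevant block,
\[
S=\tfrac13\sum_{ab}\ket{v_{ab}^{\ot n}}\!\bra{v_{ab}^{\ot n}}.
\]
The $v_{ab}$ are real vectors proportional to $e_a+e_b-2e_c$, with pairwise inner products $-1/2$. Therefore the nonzero spectrum of $S$ is that of $\frac13$ times the Gram matrix with off-diagonal entries $a_n=(-1/2)^n$. Its eigenvalues are $\frac{1+2a_n}{3}$ and $\frac{1-a_n}{3}$.

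\textbf{Step 4: check the cases.}
\begin{itemize}
\item $n=1$: the eigenvalues are $0$ and $1/2$.
\item $n=2$: the eigenvalues are $1/2$ and $1/4$.
\item $n\ge3$: since $\abs{a_n}\le1/8$, every eigenvalue is at most $(1+1/4)/3<1/2$.
\end{itemize}
This gives $\spec(S)\subseteq[0,1/2]\cup\{1\}$.

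The main obstacle I expect is Step 3. It requires checking carefully that $Q_{ab}$ is block diagonal in the $\U(2)\times\U(m)$ decomposition and is rank one on the $(n,0)$ block, with the infinite-dimensional setting handled sector by sector since each sector is finite-dimensional. As a fallback, one can avoid Howe duality altogether: compute directly that the ranges $\Sym^n(v_{ab}\ot\CC^m)$ have pairwise "angles" controlled by $\abs{\ip{v_{ab}|v_{cd}}}^n$. Via the Gram/Jordan angle argument, this bounds the top eigenvalue of $S$ by the same expression.
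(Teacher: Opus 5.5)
Your proposal is correct and follows essentially the same route as the paper. Both split off the uniform mode $(1,1,1)$ and decompose by particle number in the complementary two-dimensional mode space. Both then reduce each sector to $\tfrac13$ of the Gram matrix of the vectors $v_{ab}^{\ot n}$, whose off-diagonal entries are $(-1/2)^n$. The only difference is that you justify this reduction via $\U(2)\times\U(m)$ Howe duality, whereas the paper asserts it directly; it also follows, without Howe duality, from $\Sym^n(\CC^m\ot\CC v_{ab})\cong\Sym^n(\CC^m)\ot v_{ab}^{\ot n}$ after reordering tensor factors.
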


\begin{proof}
    We identify
    \begin{align*}
        \HH_m^{\ot 3} &=
        \mathcal F_\vee(\CC^m)^{\ot 3} \cong \mathcal F_\vee(\CC^m \ot \CC^3).
    \end{align*}
    Now, $Q_{12}$ equals the projection operator onto states where modes in $\CC^m \ot \CC^3$ of the form $x \ot (1,-1,0)^T$ are unoccupied, and similarly for $Q_{13}$ and $Q_{23}$.
    We further decompose the 3-copy Fock space as
    \begin{align}\label{eq:decompose-3-copy-bosons}
    \begin{split}
        \mathcal F_\vee(\CC^m \ot \CC^3)
        &\cong \mathcal F_\vee(\CC^m \ot (V \op W))
        \cong  \mathcal F_\vee(\CC^m \ot V) \ot \mathcal F_\vee(\CC^m \ot W) \\
        &\cong \bigoplus_{k=0}^\infty \Sym^k(\CC^m \ot V) \ot \mathcal F_\vee(\CC^m \ot W)
    \end{split}
    \end{align}
    where $W = \CC w$, with $w = (1,1,1)^T$ the uniform vector and $V = W^\perp$, so $\CC^3 \cong V \op W$.
    Define the following vectors in $V$: $\ket{v_{12}} = \frac{1}{\sqrt6}(1,1,-2)^T$, $\ket{v_{13}} = \frac{1}{\sqrt6}(1,-2,1)^T$, and $\ket{v_{23}} = \frac{1}{\sqrt6}(-2,1,1)^T$.
    Then, $Q_{ab}$ is the projection operator onto $\mathcal F_\vee(\CC^m \ot (\CC v_{ab} \op W))$.
    Overall, we can decompose according to \cref{eq:decompose-3-copy-bosons} as
    \begin{align*}
        Q_{ab} \cong \bigoplus_{k=0}^\infty Q_{ab}^{(k)} \ot I
    \end{align*}
    where $Q_{ab}^{(k)}$ projects onto $\Sym^k(\CC^m \ot \CC v_{ab}) \subseteq \Sym^k(\CC^m \ot V)$, and
    \begin{align*}
        S \cong \bigoplus_{k=0}^\infty S^{(k)} \ot I, \qquad S^{(k)} = \frac13\left(Q_{12}^{(k)} + Q_{13}^{(k)} + Q_{23}^{(k)}\right).
    \end{align*}
    The nonzero eigenvalues of $3S^{(k)}$ equal those of the Gram matrix of the three vectors $v_{ab}^{\ot k}$, that is,
    \begin{align*}
    G_k
    =\begin{pmatrix}
       1&s_k&s_k\\
       s_k&1&s_k\\
       s_k&s_k&1
     \end{pmatrix}.
\end{align*}
where $s_k = \braket{v_{12}}{v_{13}}^k = \braket{v_{12}}{v_{23}}^k = \braket{v_{13}}{v_{23}}^k = \left(-\frac12 \right)^k$.
The spectrum of $G_k$ is then $1 + 2s_k$ and $1 - s_k$ with multiplicity two. For $k = 0$, $S^{(k)}$ has only 1 as an eigenvalue.
For $k = 1, 2$ the largest eigenvalue of $S^{(k)}$ is $\frac12$, and for larger values of $k$ it is smaller. This proves the result.
\end{proof}

We now combine this to derive a robustness result for testing general pure Gaussian states.

\begin{theorem}\label{thm:main boson}
    Let $\psi \in \P(\HH_m)$. If $p(\psi) = 1 - \eps$, then $F^2_{\GG_m}(\psi) \geq 1 - 48 \eps$.
\end{theorem}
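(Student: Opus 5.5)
The plan is to reduce the four-copy test to the two results already proven: robustness of the mean-zero test (\cref{thm:main boson mean zero}) and robustness of the coherent-state test (\cref{thm:main boson coherent}). Let $\rho$ be the reduced state on the second copy of $U_{\pi/4}^\dagger \ket\psi^{\ot 2}$. By definition of $P$, we have $p(\psi) = \tr[P_0 \rho^{\ot 2}] = 1-\eps$. The argument then runs through three steps.

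\emph{Step 1: $\rho$ is nearly pure and its top eigenvector passes the mean-zero test.} By \cref{lem:purity}, the largest eigenvalue $\lambda$ of $\rho$ satisfies $\lambda \geq 1-2\eps$. Write $\rho = \lambda \phi + (1-\lambda)\sigma$ with $\phi$ the top eigenvector. Then
\begin{align*}
\tr[P_0\rho^{\ot 2}] \leq \lambda^2 p_0(\phi) + (1-\lambda^2).
\end{align*}
Rearranging gives $1 - p_0(\phi) \leq \eps/\lambda^2 = O(\eps)$. We may assume $\eps$ is below a small constant; otherwise the claimed bound $1-48\eps$ is negative and there is nothing to prove.

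\emph{Step 2: locate a vacuum-like state via Gaussian covariance.} Applying \cref{thm:main boson mean zero} to $\phi$ yields $U_S\ket0 \in \GG_m^0$ with $\abs{\braket{\phi}{U_S 0}}^2 \geq 1 - 8(1-p_0(\phi))$. Since $U_{\pi/4}$ commutes with $U_S^{\ot 2}$, replacing $\psi$ by $U_S^\dagger\psi$ replaces $\rho$ by $U_S^\dagger \rho U_S$. Both the four-copy acceptance probability and the fidelity with $\GG_m$ are invariant under this replacement, so without loss of generality $U_S\ket0 = \ket0$. Then the coherent-test acceptance probability satisfies
\begin{align*}
q(\psi) = \bra0\rho\ket0 \geq \lambda\,\abs{\braket{\phi}{0}}^2 \geq (1-2\eps)\bigl(1 - 8\eps/(1-2\eps)^2\bigr),
\end{align*}
so $1 - q(\psi) \leq c\,\eps$ with $c$ around $10$–$12$ after simplification.

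\emph{Step 3: conclude via the coherent-state test.} By \cref{thm:main boson coherent}, $F^2_{\mathcal C_m}(\psi) \geq 1 - 4c\eps$. Since coherent states are Gaussian, and undoing the conjugation by $U_S$ maps coherent states $D_r\ket0$ to Gaussian states $U_S D_r \ket0$, we get $F^2_{\GG_m}(\psi) \geq 1 - 4c\eps$.

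The expected main obstacle is the constant bookkeeping in Steps 1 and 2: passing from the mixed state $\rho$ to its top eigenvector loses factors of $\lambda^{-2}$, and these must be absorbed into $c \leq 12$ to reach $48$. I would handle this by splitting cases. For $\eps \geq 1/48$ the statement is trivial. For $\eps < 1/48$ we have $\lambda \geq 1 - 1/24$, so the factors $(1-2\eps)^{-2}$ are bounded by a fixed constant and $1 - q \leq 2\eps + 8\eps(1-2\eps)^{-1} \leq 12\eps$.
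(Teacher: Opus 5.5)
Your proposal is correct and follows the paper's proof essentially step for step. Like the paper, you use \cref{lem:purity} to show the top eigenvector of $\rho$ passes the mean-zero test with error $\eps/\lambda^2$, apply \cref{thm:main boson mean zero} and Gaussian covariance to reduce to the vacuum, bound $1-q \leq 12\eps$ for $\eps < 1/48$, and finish with \cref{thm:main boson coherent} to get the constant $48$.
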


\begin{proof}
    Suppose that the test accepts with probability $1 - \eps$, and assume $\eps \leq \frac{1}{48}$ (otherwise the bound is trivial). Then, if we let
    \begin{align*}
        \rho = \tr_1[U_{\pi/4}^\dagger \psi^{\ot 2} U_{\pi/4}]
    \end{align*}
    be the reduced state on the second copy after applying $U_{\pi/4}^\dagger$, then by definition of $P$, $P_0$ must accept $\rho^{\ot 2}$ with probability $1 - \eps$.
    By \cref{lem:purity}, if we let $\lambda$ be the largest eigenvalue of $\rho$, and $\ket{\xi}$, the associated eigenstate, $\lambda \geq 1 - 2\eps$, and we have
    \begin{align}
        \eps = \tr[(I - P_0) \rho^{\ot 2}] \geq \lambda^2 \tr[(I - P_0) \xi^{\ot 2}],
    \end{align}
    so $p_0(\xi) \geq 1 - \eps / \lambda^2$.
    By \cref{thm:main boson mean zero}, this implies that there exists a symplectic Gaussian unitary $U_S$ such that $\abs{\bra{0}U_S \ket{\xi}}^2 \geq 1 - 8\eps/\lambda^2$.
    Let $\ket{\phi} = U_S \ket{\psi}$. Since $U_{\pi/4}$ commutes with $U_S^{\ot 2}$, if we let $\sigma = \tr_1[U_{\pi/4}^\dagger \phi^{\ot 2} U_{\pi/4}]$, then the test $Q$ accepts $\ket{\phi}^{\ot 2}$ with probability
    \begin{align*}
        \bra{0} \sigma \ket{0} = \bra{0} U_S \rho U_S^\dagger \ket{0} \geq \lambda \abs{\bra{0} U_S \ket{\xi}}^2 \geq \lambda - 8\eps/\lambda^2 \geq 1 - 12\eps.
    \end{align*}
    Hence, by \cref{thm:main boson coherent}, $\ket{\phi}$ is close to a coherent state: $F_{\mathcal C_m}(\phi)^2 \geq 1 - 48\eps$. We conclude that $\ket{\psi}$ is (equally) close to a Gaussian state.
\end{proof}

The following corollary foollows from \cref{thm:main boson mean zero,thm:main boson coherent,thm:main boson} by the same argument as \cref{cor:fermion-testing}.

\begin{corollary}\label{cor:boson-testing}
Let $\eps,\delta\in(0,1)$, and let $\GG$ be any of the following sets:
\begin{enumerate}[(i),itemsep=0pt]
    \item $\GG_m$, the set of bosonic Gaussian states,
    \item $\GG_m^0$, the set of bosonic mean-zero Gaussian states,
    \item $\mathcal C_m$, the set of bosonic coherent states,
\end{enumerate}
Then there exists a test that uses $\bigO(\frac1\eps \log \frac1\delta)$ copies of a pure state~$\psi$ to achieve the following:
if $\psi \in \GG$, the test always accepts, while if $F^2_{\GG}(\psi) \leq 1-\eps$, it rejects with probability at least~$1-\delta$.
For $\GG_m$ the test acts on four copies at a time; for $\GG_m^0$ and $\mathcal C_m$ it acts on two copies at a time.
\end{corollary}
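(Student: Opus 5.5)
The plan follows \cref{cor:fermion-testing} verbatim: repeat the relevant accepting measurement on fresh blocks of copies and accept iff every repetition accepts. For $\GG = \GG_m^0$ we use the two-copy measurement $\{P_0, I - P_0\}$. For $\GG = \mathcal C_m$ we use the two-copy measurement $\{Q, I-Q\}$. For $\GG = \GG_m$ we use the four-copy measurement $\{P, I-P\}$.

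\emph{Completeness.} If $\psi\in\GG$, each single measurement accepts with probability one. For $\GG_m^0$ this is \cref{prp:mean-zero-gaussian}, and for $\GG_m$ and $\mathcal C_m$ it is the subsequent proposition characterizing $p(\psi)=1$ and $q(\psi)=1$. Hence the repeated test always accepts.

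\emph{Soundness.} Suppose $F^2_\GG(\psi)\le 1-\eps$, and let $r(\psi)$ denote the single-shot acceptance probability ($p_0$, $q$ or $p$). The robustness theorems give the contrapositive: if $r(\psi) = 1-\eps'$, then $F_\GG^2(\psi) \ge 1 - C\eps'$, with $C=8$ by \cref{thm:main boson mean zero}, $C=4$ by \cref{thm:main boson coherent}, and $C=48$ by \cref{thm:main boson}. These theorems are stated with equality $r(\psi)=1-\eps'$, but that is harmless: simply set $\eps' := 1-r(\psi)$. Then $1-\eps \ge F_\GG^2(\psi) \ge 1-C\eps'$, so $\eps' \ge \eps/C$, i.e.\ $r(\psi)\le 1-\eps/C$.

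Since the repetitions act on fresh copies of the pure state $\psi$, they are independent. Taking $n=\ceil{\frac{C}{\eps}\log\frac1\delta}$ repetitions, the test accepts with probability
\[
r(\psi)^n \le \Bigl(1-\frac{\eps}{C}\Bigr)^n \le e^{-n\eps/C} \le \delta .
\]
The total number of copies used is $2n$ or $4n$, which is $\bigO(\frac1\eps\log\frac1\delta)$. The claimed locality (two copies at a time for $\GG_m^0$ and $\mathcal C_m$, four for $\GG_m$) is immediate from the choice of single-shot measurements.

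There is no genuine obstacle, since every step invokes an earlier result. The only points needing care are reading the theorems' ``$=1-\eps$'' hypotheses as statements about an arbitrary acceptance probability, as above, and noting that the constants $8$, $4$, $48$ are absolute, so the resulting copy count is independent of $m$.
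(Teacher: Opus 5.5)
Your proposal is correct and matches the paper's own argument: the paper obtains this corollary by repeating the proof of \cref{cor:fermion-testing} with the robustness bounds of \cref{thm:main boson mean zero,thm:main boson coherent,thm:main boson} (constants $8$, $4$, $48$), and uses the accompanying propositions for perfect completeness. Your observation that the ``$=1-\eps$'' hypotheses can be applied by setting $\eps' := 1-r(\psi)$ is exactly the bookkeeping needed to turn those theorems into the soundness bound.
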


\section*{Acknowledgments}
\addcontentsline{toc}{section}{Acknowledgments}
We warmly thank David Gross for many discussions on Gaussian testing.
In particular, the idea of using the gradient flow to upper bound the distance of a state to the set of Gaussian states arose in discussions between him and the authors in summer 2024.

MW acknowledges support by the European Union (ERC Grant SYMOPTIC, 101040907), by the Deutsche Forschungsgemeinschaft (DFG, German Research Foundation, 556164098), by the Deutsche Forschungsgemeinschaft under Germany's Excellence Strategy~--~EXC-2111~--~390814868, by the German Federal Ministry of Research, Technology and Space (QuSol, 13N17173), and by the Klaus Tschira Foundation, and thanks Q-FARM and the Leinweber Institute for Theoretical Physics at Stanford University and the Simons Institute for the Theory of Computing at UC Berkeley for hospitality.

\subsection*{Use of AI}
The authors thought about the problem for quite some time, and eventually it was solved with important contributions by an LLM.
The human contributions have been: we formulated the question, observed the relation to symmetries of copies of Gaussian states, conjectured that the tests analyzed in this work should have constant sample complexity, and realized the strategy of lower bounding the Gaussian fidelity by analyzing the gradient flow for the acceptance probability and upper bounding the gradient to control the total motion, inspired by \cite{kwok2018paulsen,gradflow}.
A key insight, however, has been provided by an LLM, namely that the gradient can be bounded in terms of the spectral gap of the three-copy operator~$S$, which had already been computed in \cite{kac1956foundations,carlen2003determination,caputo2008spectral} in a completely different context (the Kac walk).
A final human contribution was to write a clear exposition of the results.

\bibliographystyle{unsrt}
\bibliography{library}

\end{document}